\documentclass{amsproc}%
\usepackage{amsfonts}
\usepackage{amsmath}
\usepackage{amssymb}
\usepackage{graphicx}%
\setcounter{MaxMatrixCols}{30}
\theoremstyle{plain}
\newtheorem{acknowledgement}{Acknowledgement}

\newtheorem{corollary}{Corollary}

\newtheorem{definition}{Definition}

\newtheorem{lemma}{Lemma}

\newtheorem{proposition}{Proposition}
\newtheorem{remark}{Remark}

\newtheorem{theorem}{Theorem}
\numberwithin{equation}{section}
\begin{document}
\title[Non Archimedean Klein-Gordon Equations]{The Cauchy Problem for Non-Archimedean Pseudodifferential Equations of
Klein-Gordon Type}
\author{W. A. Z\'{u}\~{n}iga-Galindo}
\address{Centro de Investigaci\'{o}n y de Estudios Avanzados del Instituto
Polit\'{e}cnico Nacional\\
Departamento de Matem\'{a}ticas- Unidad Quer\'{e}taro\\
Libramiento Norponiente \#2000, Fracc. Real de Juriquilla. Santiago de
Quer\'{e}taro, Qro. 76230\\
M\'{e}xico}
\email{wazuniga@math.cinvestav.edu.mx}
\thanks{The author was partially supported by Conacyt (Mexico), Grant \# 127794.}

\begin{abstract}
In this article we introduce a new class of non-Archime\-dean
pseudodifferential equations of Klein-Gordon type and study the corresponding
Cauchy problem for these equations. A remarkable fact is that the
non-Archime\-dean Klein-Gordon equations exhibit many similar properties to
the classical Klein-Gordon equations.

\end{abstract}
\subjclass{35S05, 26E30}
\keywords{Pseudodifferential operators, Klein-Gordon operators, Cauchy problems,\ $p$%
-adic analysis. }
\maketitle

\section{Introduction}

In the 1980's I. Volovich proposed that the world geometry in regimes smaller
than the Planck scale might be non-Archimedean \ \cite{Vol1}, \cite{Vol2}.
This hypothesis conducts naturally to consider models involving geometry and
analysis over $\mathbb{Q}_{p}$, the field of $p$-adic numbers, $p$ being a
rational prime $\geq2$. Since then, a big number of articles have appeared
exploring these and related themes, see e.g. \cite{D-K-K-V}, \cite[Chapter
6]{Var} and the references therein. In particular, nowadays there is a strong
interest in studying pseudodifferential equations over $p$-adic fields, see
e.g. \cite{A-K-S}, \cite{Koch}, \cite{Koch2}, \cite{K-Kos}.

In this article we introduce a new class of non-Archimedean pseudodifferential
equations of Klein-Gordon type. We work on the $p$-adic Minkowski space which
is the quadratic space $\left(  \mathbb{Q}_{p}^{4},Q\right)  $ where
$Q(k)=k_{0}^{2}-k_{1}^{2}-k_{2}^{2}-k_{3}^{2}$. Our starting point is a result
of Rallis-Schiffmann that asserts the existence of a unique measure on
$V_{t}=\left\{  k\in\mathbb{Q}_{p}^{4}:Q\left(  k\right)  =t\right\}  $ which
is invariant under the orthogonal group $O(Q)$\ of $Q$, see Proposition
\ref{prop1} or \cite{R-S}. By using Gel'fand-Leray differential forms, we
reformulate this results in terms of Dirac distributions $\delta\left(
Q\left(  k\right)  -t\right)  $ invariant under $O(Q)$, see Remark \ref{nota1}
and Lemma \ref{lemma1}. We introduce the positive and negative mass shells
$V_{m^{2}}^{+}$ and $V_{m^{2}}^{-}$, here $m$ is the `mass parameter' which is
taken to be a nonzero $p$-adic number. The restriction of $\delta\left(
Q\left(  k\right)  -t\right)  $ to $V_{m^{2}}^{\pm}$ gives two distributions
$\delta_{\pm}\left(  Q\left(  k\right)  -t\right)  $ which are invariant under
$\mathcal{L}_{+}^{\uparrow}$, the `Lorentz proper group', see Definition
\ref{Lorentz group}, and that satisfy $\delta\left(  Q\left(  k\right)
-t\right)  =\delta_{+}\left(  Q\left(  k\right)  -t\right)  +\delta_{-}\left(
Q\left(  k\right)  -t\right)  $, see Lemma \ref{lemma4}. The $p$-adic
Klein-Gordon type pseudodifferential operators introduced here have the form
\[
\left(  \square_{\alpha,m}\varphi\right)  \left(  x\right)  =\mathcal{F}%
_{k\rightarrow x}^{-1}\left[  \left\vert Q(k)-m^{2}\right\vert _{p}^{\alpha
}\mathcal{F}_{x\rightarrow k}\varphi\right]  \text{, }\alpha>0\text{, }%
m\in\mathbb{Q}_{p}\smallsetminus\left\{  0\right\}  \text{,}%
\]
where $\mathcal{F}$ denotes the Fourier-Minkowski transform. We solve the
Cauchy problem for these operators, see Theorem \ref{TheoA}.

The equations $\left(  \square_{\alpha,m}\phi\right)  \left(  t,\boldsymbol{x}%
\right)  =0$ have many similar properties to the classical Klein-Gordon
equations, see e.g. \cite{DJager}, \cite{DJager2}, \cite{Schweber}. These
equations admit plane waves as weak solutions, see Lemma
\ref{Lemma_plane_waves}; the distributions $a\mathcal{F}^{-1}\left[
\delta_{+}\left(  Q\left(  k\right)  -m^{2}\right)  \right]  +b\mathcal{F}%
^{-1}\left[  \delta_{-}\left(  Q\left(  k\right)  -m^{2}\right)  \right]  $,
$a,b\in\mathbb{C}$, are weak solutions of these equations, see Proposition
\ref{Prop1}. The locally constant functions
\begin{equation}
\phi\left(  t,\boldsymbol{x}\right)  =%
{\displaystyle\int\limits_{U_{Q,m}}}
\chi_{p}\left(  \boldsymbol{x}\cdot\boldsymbol{k}\right)  \left\{  \chi
_{p}\left(  -t\omega\left(  \boldsymbol{k}\right)  \right)  \phi_{+}\left(
\boldsymbol{k}\right)  +\chi_{p}\left(  t\omega\left(  \boldsymbol{k}\right)
\right)  \phi_{-}\left(  \boldsymbol{k}\right)  \right\}  d^{3}\boldsymbol{k},
\label{phi}%
\end{equation}
where $\chi_{p}\left(  \cdot\right)  $ denotes the standard additive character
of $\mathbb{Q}_{p}$ and $\phi_{\pm}$ are locally constant functions with
support in $U_{Q,m}$, are weak solutions of these equations.

At this point, it is relevant to mention that the operators, equations and
techniques introduced here are new. In \cite{A-K-S} a very general theory for
pseudodifferential operators and equations involving symbols that vanish only
at the origin was developed. This theory cannot be applied here because our
symbols ($\left\vert Q(k)-m^{2}\right\vert _{p}^{\alpha}$) have infinitely
many zeros.

Finally, the quantization of solutions (\ref{phi}) and its connection with the
second quantization is an open problem. It `seems' that the construction of a
neutral (real) quantum scalar field with mass parameter $m\in\mathbb{Q}%
_{p}^{\times}$ can be carried out using the machinery of the second
\ quantization starting with $\mathcal{H}=L^{2}\left(  V_{m^{2}}^{+}%
,d\lambda_{m^{2}}\right)  $, the state space for a single spin-zero particle
of mass $m$, see e.g. \cite{Dim}, \cite{Folland}, \cite{Reed-Simon}. But there
are several mathematical and physical obstacles to overcome.

\section{Preliminaries}

Along this article $p$ will denote a prime number different from $2$. The
field of $p-$adic numbers $\mathbb{Q}_{p}$ is defined as the completion of the
field of rational numbers $\mathbb{Q}$ with respect to the $p-$adic norm
$|\cdot|_{p}$, which is defined as
\[
|x|_{p}=%
\begin{cases}
0 & \text{if }x=0\\
p^{-\gamma} & \text{if }x=p^{\gamma}\dfrac{a}{b},
\end{cases}
\]
where $a$ and $b$ are integers coprime with $p$. The integer $\gamma:=ord(x)$,
with $ord(0):=+\infty$, is called the\textit{ }$p-$\textit{adic order of} $x$.
Any $p-$adic number $x\neq0$ has a unique expansion $x=p^{ord(x)}\sum
_{j=0}^{\infty}x_{i}p^{j}$, where $x_{j}\in\{0,1,2,\dots,p-1\}$ and $x_{0}%
\neq0$. Thus any nonzero $p-$adic number can be written uniquely as
$x=p^{ord(x)}ac(x)$, where $ac(x)$, \textit{the angular component of }$x$, is
a unit i.e. $\left\vert ac(x)\right\vert _{p}=1$. For a unit $a=\sum
_{j=0}^{\infty}a_{i}p^{j}$, $a_{0}\neq0$, we define $\overline{a}:=a_{0}%
\in\mathbb{F}_{p}$, where $\mathbb{F}_{p}$ denotes the field of $p$ elements.
We also define \textit{the fractional part of }$x\in\mathbb{Q}_{p}$, denoted
$\{x\}_{p}$, as the rational number
\[
\{x\}_{p}=%
\begin{cases}
0 & \text{if }x=0\text{ or }ord(x)\geq0\\
p^{\text{ord}(x)}\sum_{j=0}^{-ord(x)-1}x_{j}p^{j} & \text{if }ord(x)<0.
\end{cases}
\]
Set $\chi_{p}(y)=\exp(2\pi i\{y\}_{p})$ for $y\in\mathbb{Q}_{p}$. The map
$\chi_{p}(\cdot)$ is an additive character on $\mathbb{Q}_{p}$, i.e. a
continuous map from $\mathbb{Q}_{p}$ into the unit circle satisfying $\chi
_{p}(y_{0}+y_{1})=\chi_{p}(y_{0})\chi_{p}(y_{1})$, $y_{0},y_{1}\in
\mathbb{Q}_{p}$.

We extend the $p-$adic norm to $\mathbb{Q}_{p}^{n}$ by taking%
\[
||x||_{p}:=\max_{1\leq i\leq n}|x_{i}|_{p},\qquad\text{for }x=(x_{1}%
,\dots,x_{n})\in\mathbb{Q}_{p}^{n}.
\]
For $\gamma\in\mathbb{Z}$, denote by $B_{\gamma}^{n}(a)=\{x\in\mathbb{Q}%
_{p}^{n}:||x-a||_{p}\leq p^{\gamma}\}$ \textit{the ball of radius }$p^{\gamma
}$ \textit{with center at} $a=(a_{1},\dots,a_{n})\in\mathbb{Q}_{p}^{n}$, and
take $B_{\gamma}^{n}(0):=B_{\gamma}^{n}$. Note that $B_{\gamma}^{n}%
(a)=B_{\gamma}(a_{1})\times\cdots\times B_{\gamma}(a_{n})$, where $B_{\gamma
}(a_{i}):=\{x\in\mathbb{Q}_{p}:|x_{i}-a_{i}|_{p}\leq p^{\gamma}\}$ is the
one-dimensional ball of radius $p^{\gamma}$ with center at $a_{i}\in
\mathbb{Q}_{p}$. The ball $B_{0}^{n}(0)$ is equal to the product of $n$ copies
of $B_{0}(0):=\mathbb{Z}_{p}$, \textit{the ring of }$p-$\textit{adic integers}.

A complex-valued function $\varphi$ defined on $\mathbb{Q}_{p}^{n}$ is
\textit{called locally constant} if for any $x\in\mathbb{Q}_{p}^{n}$ there
exists an integer $l(x)\in\mathbb{Z}$ such that%
\begin{equation}
\varphi(x+x^{\prime})=\varphi(x)\text{ for }x^{\prime}\in B_{l(x)}^{n}.
\label{local_constancy}%
\end{equation}
A function $\varphi:\mathbb{Q}_{p}^{n}\rightarrow\mathbb{C}$ is called a
\textit{Bruhat-Schwartz function (or a test function)} if it is locally
constant with compact support. The $\mathbb{C}$-vector space of
Bruhat-Schwartz functions is denoted by $\mathbf{S}(\mathbb{Q}_{p}^{n})$. Let
$\mathbf{S}^{\prime}(\mathbb{Q}_{p}^{n})$ denote the set of all functionals
(distributions) on $\mathbf{S}(\mathbb{Q}_{p}^{n})$. All functionals on
$\mathbf{S}(\mathbb{Q}_{p}^{n})$ are continuous, see e.g. \cite[p. 84]{V-V-Z}.

For a detailed discussion on $p$-adic analysis the reader may consult
\cite{A-K-S}, \cite{Koch}, \cite{Taibleson}, \cite{V-V-Z}.

\subsection{Fourier transform on finite dimensional vector spaces}

Let $E$ be a finite dimensional vector space over $\mathbb{Q}_{p}$ and
$\chi_{p}$ a non-trivial additive character of $\mathbb{Q}_{p}$\ as before.
Let $\left[  x,y\right]  $ be a symmetric non-degenerate $\mathbb{Q}_{p}%
-$bilinear form on $E\times E$. Thus $Q(e):=\left[  e,e\right]  $, $e\in E$ is
a \textit{non-degenerate quadratic form} on $E$.\ We identify $E$ with its
algebraic dual $E^{\ast}$ by means of $\left[  \cdot,\cdot\right]  $. We now
identify the dual group (i.e. the Pontryagin dual) of $\left(  E,+\right)  $
with $E^{\ast}$ by taking $\left\langle e,e^{\ast}\right\rangle =\chi
_{p}\left(  \left[  e,e^{\ast}\right]  \right)  $ where $\left[  e,e^{\ast
}\right]  $ is the algebraic duality. The Fourier transform takes the form%
\[
\widehat{\varphi}\left(  y\right)  =%
{\displaystyle\int\limits_{E}}
\varphi\left(  x\right)  \chi_{p}\left(  \left[  x,y\right]  \right)  dx\text{
for }\varphi\in L^{1}\left(  E\right)  \text{,}%
\]
where $dx$ is a Haar measure on $E$.

Let $\mathcal{L}\left(  E\right)  $ be the space of continuous functions
$\varphi$ in $L^{1}(E)$ whose Fourier transform $\widehat{\varphi}$ is in
$L^{1}(E)$.The measure $dx$ can be normalized uniquely in such manner that
$\widehat{\left(  \widehat{\varphi}\right)  }\left(  x\right)  =\varphi\left(
-x\right)  $ for every $\varphi$ belonging to $\mathcal{L}\left(  E\right)  $.
We say that $dx$ is \textit{\ a self-dual measure relative\ to} $\chi
_{p}\left(  \left[  \cdot,\cdot\right]  \right)  $.

For further details about the material presented in this section the reader
may consult \cite{We1}.

\subsection{The $p$-adic Minkowski space}

We take $E$ to be the $\mathbb{Q}_{p}$-vector space of dimension $4$. By
fixing a basis we identify $E$ with $\mathbb{Q}_{p}^{4}$ considered as a
$\mathbb{Q}_{p}$-vector space. For $x=\left(  x_{0},x_{1},x_{2},x_{3}\right)
:=\left(  x_{0},\boldsymbol{x}\right)  $ and $y=\left(  y_{0},y_{1}%
,y_{2},y_{3}\right)  :=\left(  y_{0},\boldsymbol{y}\right)  $ in
$\mathbb{Q}_{p}^{4}$ we set
\begin{equation}
\left[  x,y\right]  :=x_{0}y_{0}-x_{1}y_{1}-x_{2}y_{2}-x_{3}y_{3}:=x_{0}%
y_{0}-\boldsymbol{x}\cdot\boldsymbol{y}, \label{bilinear_form}%
\end{equation}
which is a symmetric non-degenerate bilinear form. \textit{From now on, we use
}$\left[  x,y\right]  $\textit{ to mean bilinear form (\ref{bilinear_form})}.
Then $\left(  \mathbb{Q}_{p}^{4},Q\right)  $, with $Q(x)=\left[  x,x\right]  $
is a \textit{quadratic vector space} and $Q$ is a non-degenerate quadratic
form on $\mathbb{Q}_{p}^{4}$. We will call $\left(  \mathbb{Q}_{p}%
^{4},Q\right)  $ \textit{the }$p$\textit{-adic Minkowski space}.

On $\left(  \mathbb{Q}_{p}^{4},Q\right)  $, the Fourier transform takes the
form%
\begin{equation}
\mathcal{F}\left[  \varphi\right]  \left(  k\right)  =%
{\displaystyle\int\limits_{\mathbb{Q}_{p}^{4}}}
\chi_{p}\left(  \left[  x,k\right]  \right)  \varphi\left(  x\right)
d^{4}x\text{ for }\varphi\in L^{1}\left(  \mathbb{Q}_{p}^{4}\right)  \text{,}
\label{Lorentz_Fourier transform}%
\end{equation}
where $d^{4}x$ is a self-dual measure for $\chi_{p}\left(  \left[  \cdot
,\cdot\right]  \right)  $, i.e. $\mathcal{F}\left[  \mathcal{F}\left[
\varphi\right]  \right]  \left(  x\right)  =\varphi\left(  -x\right)  $ for
every $\varphi$ belonging to $\mathcal{L}\left(  \mathbb{Q}_{p}^{4}\right)  $.
Notice that $d^{4}x$ is equal to a positive multiple of the normalized Haar
measure on $\mathbb{Q}_{p}^{4}$, i.e. $d^{4}x=Cd^{4}\mu\left(  x\right)  $,
with $C>0$.

\begin{remark}
(i) We set the usual Fourier transform $\mathfrak{F}$ to be
\[
\mathfrak{F}\left[  \varphi\right]  \left[  k\right]  :=%
{\displaystyle\int\limits_{\mathbb{Q}_{p}^{4}}}
\chi_{p}\left(  x_{0}k_{0}+x_{1}k_{1}+x_{2}k_{2}+x_{3}k_{3}\right)
\varphi\left(  x\right)  d^{4}\mu\left(  x\right)  \text{ for }\varphi\in
L^{1}\left(  \mathbb{Q}_{p}^{4}\right)  \text{,}%
\]
where $d^{4}\mu\left(  x\right)  $ is the normalized Haar measure of
$\mathbb{Q}_{p}^{4}$. The connection between $\mathcal{F}$ and $\mathfrak{F}$
is given by the formula%
\[
\mathfrak{F}\left[  \mathcal{F}\left[  \varphi\left(  x_{0},x_{1},x_{2}%
,x_{3}\right)  \right]  \right]  =C\varphi\left(  x_{0},-x_{1},-x_{2}%
,-x_{3}\right)  ,
\]
which is equally valid for integrable functions as well as distributions.

(ii) Note that $C=1$, i.e. $d^{4}x=d^{4}\mu\left(  x\right)  $. Indeed, take
$\varphi\left(  x\right)  $ to be the characteristic function of
$\mathbb{Z}_{p}^{4}$, now
\begin{align*}
\varphi\left(  x\right)   &  =\mathcal{F}\left[  \mathcal{F}\left[
\varphi\right]  \right]  \left(  x\right)  =C\mathcal{F}_{k\rightarrow
x}\left[  \mathfrak{F}\left[  \varphi\right]  \left(  k_{0},-\boldsymbol{k}%
\right)  \right]  =C\mathcal{F}_{k\rightarrow x}\left[  \varphi\left(
k_{0},\boldsymbol{k}\right)  \right] \\
&  =C^{2}\mathfrak{F}\left[  \varphi\left(  x_{0},-\boldsymbol{x}\right)
\right]  =C^{2}\varphi\left(  x\right)  \text{,}%
\end{align*}
therefore $C=\pm1$.
\end{remark}

\subsection{Invariant measures under the orthogonal group $O(Q)$}

We set $Q(x)=\left[  x,x\right]  $ as before. We also set%
\[
G:=\left[
\begin{array}
[c]{cccc}%
1 & 0 & 0 & 0\\
0 & -1 & 0 & 0\\
0 & 0 & -1 & 0\\
0 & 0 & 0 & -1
\end{array}
\right]  .
\]
Then $Q(x)=x^{T}Gx$, where $T$ denotes the transpose of a matrix. The
orthogonal group of $Q(x)$\ is defined as
\begin{align*}
O(Q)  &  =\left\{  \Lambda\in GL_{4}\left(  \mathbb{Q}_{p}\right)  :\left[
\Lambda x,\Lambda y\right]  =\left[  x,y\right]  \right\} \\
&  =\left\{  \Lambda\in GL_{4}\left(  \mathbb{Q}_{p}\right)  :\Lambda
^{T}G\Lambda=G\right\}  .
\end{align*}

Notice that any $\Lambda\in O(Q)$ satisfies $\det\Lambda=\pm1$. We consider
$O(Q)$ as a $p$-adic Lie subgroup of $GL_{4}\left(  \mathbb{Q}_{p}\right)  $
which is a $p$-adic Lie group.

For $t\in\mathbb{Q}_{p}^{\times}$, we set%
\[
V_{t}:=\left\{  k\in\mathbb{Q}_{p}^{4}:Q\left(  k\right)  =t\right\}  .
\]

\begin{proposition}
[{Rallis-Schiffman, \cite[Proposition 2-2]{R-S}}]\label{prop1}The orthogonal
group $O\left(  Q\right)  $ acts transitively on $V_{t}$. On each orbit
$V_{t}$ there is a measure which is invariant under $O\left(  Q\right)  $ and
unique up to multiplication by a positive constant.
\end{proposition}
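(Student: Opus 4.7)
The plan is to split the statement into two parts: (i) transitivity of $O(Q)$ on $V_{t}$, and (ii) existence and uniqueness (up to a positive constant) of the invariant measure on $V_{t}$.

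For (i), I would invoke Witt's extension theorem, which is valid over any field of characteristic different from $2$, in particular over $\mathbb{Q}_{p}$. Given two points $k,k^{\prime}\in V_{t}$ with $t\neq 0$, both vectors are anisotropic, and the assignment $ak\mapsto ak^{\prime}$ defines a $\mathbb{Q}_{p}$-linear isometry between the one-dimensional quadratic subspaces $\mathbb{Q}_{p}k$ and $\mathbb{Q}_{p}k^{\prime}$, because $Q(ak)=a^{2}t=Q(ak^{\prime})$. Since $(\mathbb{Q}_{p}^{4},Q)$ is non-degenerate, Witt's theorem extends this isometry to some $\Lambda\in O(Q)$ with $\Lambda k=k^{\prime}$, giving transitivity.

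For (ii), I would realize $V_{t}$ as a homogeneous space. Fix $k_{0}\in V_{t}$ and let $H_{k_{0}}:=\{\Lambda\in O(Q):\Lambda k_{0}=k_{0}\}$ be its stabilizer. Since $k_{0}$ is anisotropic, we have the orthogonal decomposition $\mathbb{Q}_{p}^{4}=\mathbb{Q}_{p}k_{0}\oplus k_{0}^{\perp}$, so that $H_{k_{0}}$ is naturally identified with $O(Q|_{k_{0}^{\perp}})$, the orthogonal group of a non-degenerate ternary quadratic form. Transitivity from (i) yields an $O(Q)$-equivariant homeomorphism $O(Q)/H_{k_{0}}\xrightarrow{\sim} V_{t}$. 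Both $O(Q)$ and $H_{k_{0}}\cong O(Q|_{k_{0}^{\perp}})$ are $p$-adic Lie groups that are unimodular (orthogonal groups of non-degenerate quadratic forms over a local field are reductive, hence unimodular; one can also verify this directly from the fact that $\det \Lambda=\pm 1$, so the modular character is trivial). By the standard result of Weil on invariant measures on locally compact quotient spaces $G/H$, when both $G$ and $H$ are unimodular there exists a $G$-invariant Radon measure on $G/H$, and it is unique up to multiplication by a positive constant.

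The only substantive step is invoking Witt's theorem; everything else is a routine application of the general theory of invariant measures on homogeneous spaces of locally compact groups. The main obstacle, if any, is verifying that the stabilizer $H_{k_{0}}$ is unimodular, but this follows cleanly from its identification with an orthogonal group of a non-degenerate form (or equivalently from $|\det|_{p}\equiv 1$ on $O(Q)$, so that $O(Q)$ is in particular unimodular, and by the same token so is $H_{k_{0}}$).
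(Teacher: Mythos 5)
The paper offers no proof of this proposition: it is imported verbatim from Rallis--Schiffman \cite[Proposition 2-2]{R-S}, so there is nothing internal to compare your argument against; what the paper does instead is later realize the invariant measure concretely via Gel'fand--Leray forms (Lemma \ref{lemma1}), which complements rather than replaces the abstract existence--uniqueness you are proving. Your sketch is the standard argument and is essentially correct. Transitivity via Witt's extension theorem is exactly right: for $t\neq 0$ the vectors $k,k'$ span non-degenerate lines, $ak\mapsto ak'$ is an isometry between them, and Witt extends it to an element of $O(Q)$. For the measure, identifying $V_{t}$ with $O(Q)/H_{k_{0}}$ and invoking the Weil criterion on quotients of unimodular groups is the expected route; you should add one sentence noting that the continuous bijection $O(Q)/H_{k_{0}}\to V_{t}$ is a homeomorphism because $O(Q)$ is second countable and locally compact and $V_{t}$ is locally compact Hausdorff, so the orbit map is open. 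The one justification that does not work as stated is the parenthetical claim that unimodularity of $O(Q)$ follows ``directly from $\det\Lambda=\pm1$'': the modular function is $\left\vert \det\mathrm{Ad}(g)\right\vert _{p}$ computed on the Lie algebra $\mathfrak{o}(Q)$, not the determinant of $g$ in the standard four-dimensional representation, so this is a non sequitur. The conclusion is still true --- $\mathrm{Ad}$ preserves the non-degenerate trace form on $\mathfrak{o}(Q)$, so $\det\mathrm{Ad}(g)=\pm1$ and the modular character is trivial; or one simply cites, as in your primary argument, that $O(Q)$ and the stabilizer $O(Q\mid_{k_{0}^{\perp}})$ of an anisotropic vector are reductive, hence unimodular. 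With that small repair the proof is complete.
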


For each $t\in\mathbb{Q}_{p}^{\times}$, let $d\mu_{t}$ be a measure on $V_{t}$
invariant under $O\left(  Q\right)  $. Since $V_{t}$\ is closed in
$\mathbb{Q}_{p}^{4}$, it is possible to consider $d\mu_{t}$ as a measure on
$\mathbb{Q}_{p}^{4}$ supported on $V_{t}$.

\subsection{$p$-adic analytic manifolds}

We give a brief review on $p$-adic manifolds in the sense of Serre. For
further details the reader may consult \cite{Igusa}.

We recall that a $p$\textit{-adic analytic manifold }$X$\textit{ of dimension
}$n$ is a topological Hausdorff space equipped with an atlas\textit{, }which
is a family of compatible charts $\left\{  \left(  U,\phi_{U}\right)
\right\}  $ covering $X$. As in the Archimedean case, each chart is a pair
$\left(  U,\phi_{U}\right)  $, where $U$ is a nonempty open subset of $X$ and
$\phi_{U}$ is a homeomorphism from $U$ to $\phi_{U}\left(  U\right)
\subset\mathbb{Q}_{p}^{n}$, with $n$ fixed.

\subsubsection{Gel'fand-Leray differential forms}

Since $\nabla Q(k)\neq0$ for any $k\in V_{t}$, by using the non-Archimedean
implicit function theorem one verifies that $V_{t}$ is \textit{a }$p$-adic
closed submanifold of codimension\textit{ }$1$.

The condition $\nabla Q(k)\neq0$ for any $k\in V_{t}$, implies the existence
of a $p$-adic analytic differential form $\lambda_{t}$ on $V_{t}$ satisfying%
\begin{equation}
dk_{0}\wedge dk_{1}\wedge dk_{2}\wedge dk_{3}=dQ\left(  k\right)
\wedge\lambda_{t}. \label{Gelfand-Leray}%
\end{equation}
A such form is typically called a \textit{Gel'fand-Leray form. }The
differential form $\lambda_{t}$ is not unique but its restriction to $V_{t}$
is independent of the choice of $\lambda_{t}$, see e.g. \cite[Chap. III, Sect.
1-9]{G-S}, \cite[Section 7.4 and 7.6]{Igusa}, \cite{Z-G4}. We denote the
corresponding measure as $\lambda_{t}\left(  A\right)  =%
{\textstyle\int\nolimits_{A}}
d\lambda_{t}$ for an open compact subset $A$ of $V_{t}$.

The notation $\left(  k_{0},\ldots,\widehat{k}_{l\left(  j\right)  }%
,\ldots,k_{3}\right)  $ means omit the $l\left(  j\right)  $th coordinate. We
now describe this measure in a suitable chart. We may assume that $V_{t}$ is a
countable disjoint union of submanifolds of the form%
\begin{equation}
V_{t}^{\left(  j\right)  }:=\left\{
\begin{array}
[c]{l}%
\left(  k_{0},\ldots,k_{3}\right)  \in\mathbb{Q}_{p}^{4}:k_{l\left(  j\right)
}=h_{j}\left(  k_{0},\ldots,\widehat{k}_{l\left(  j\right)  },\ldots
,k_{3}\right) \\
\text{with }\left(  k_{0},\ldots,\widehat{k}_{l\left(  j\right)  }%
,\ldots,k_{3}\right)  \in V_{t}^{\left(  j\right)  },
\end{array}
\right\}  \label{Mj}%
\end{equation}
where $h_{j}\left(  k_{0},\ldots,\widehat{k}_{l\left(  j\right)  }%
,\ldots,k_{3}\right)  $ is a $p$-adic analytic function on some open compact
subset $V_{j}$\ of $\mathbb{Q}_{p}^{3}$, and $\frac{\partial Q}{\partial
k_{l\left(  j\right)  }}\left(  z\right)  \neq0$ for any $z\in V_{t}^{\left(
j\right)  }$. If $A$ is a compact open subset contained in $V_{t}^{\left(
j\right)  }$, then%
\begin{equation}
\lambda_{t}\left(  A\right)  =%
{\displaystyle\int\limits_{h_{j}^{-1}\left(  A\right)  }}
\frac{dk_{0}\ldots d\widehat{k}_{l\left(  j\right)  }\ldots dk_{3}}{\left\vert
\frac{\partial Q}{\partial k_{l\left(  j\right)  }}\left(  k\right)
\right\vert _{p}}, \label{Omega}%
\end{equation}
where we are identifying the set $A\subset V_{t}^{\left(  j\right)  }$ with
\ the set of all the coordinates of the points of $A$, which is a subset of
$\mathbb{Q}_{p}^{3}$, and $h_{j}^{-1}\left(  A\right)  $ denotes the subset of
$\mathbb{Q}_{p}^{4}$ consisting of the points $(k_{0},k_{1},k_{2},k_{3})$ such
that \ $k_{l\left(  j\right)  }=h_{j}\left(  k_{0},\ldots,\widehat
{k}_{l\left(  j\right)  },\ldots,k_{3}\right)  $ for $\left(  k_{0}%
,\ldots,\widehat{k}_{l\left(  j\right)  },\ldots,k_{3}\right)  \in A$.

\begin{remark}
\label{nota1}(i) Let $\mathcal{T}\left(  V_{t}\right)  $ denote the family of
all compact open subsets of $V_{t}$. Then $\lambda_{t}$ is a additive function
on $\mathcal{T}\left(  V_{t}\right)  $ such that $\lambda_{t}\left(  A\right)
\geq0$ for every $A$ in $\mathcal{T}\left(  V_{t}\right)  $. By
Carath\'{e}odory's extension theorem $\lambda_{t}$ has a unique extension to
the $\sigma$-algebra generated by $\mathcal{T}\left(  V_{t}\right)  $.\ We
also note that the measure $\lambda_{t}$ is supported on $V_{t}$.

(ii) Let $\mathbf{S}\left(  V_{t}\right)  $ denote the $\mathbb{C}$-vector
space generated by the characteristic functions of the elements of
$\mathcal{T}\left(  V_{t}\right)  $. The fact that $\lambda_{t}$ is a positive
additive function on $\mathcal{T}\left(  V_{t}\right)  $ is equivalent to say
that
\[%
\begin{array}
[c]{ccc}%
\mathbf{S}\left(  V_{t}\right)  & \rightarrow & \mathbb{C}\\
&  & \\
\varphi & \rightarrow &
{\displaystyle\int\limits_{\mathbb{Q}_{p}^{4}}}
\varphi\left(  k\right)  d\lambda_{t}\left(  k\right)
\end{array}
\]
is a positive distribution. We can identify the measure $d\lambda_{t}$ with a
distribution on $\mathbb{Q}_{p}^{4}$ supported on $V_{t}$.

(iv) Some authors use $\delta\left(  Q\left(  k\right)  -t\right)  $ or
$\delta\left(  Q\left(  k\right)  -t\right)  d^{4}k$ to denote the measure
$d\lambda_{t}$. We will use $\delta\left(  Q\left(  k\right)  -t\right)  $.
\end{remark}

\begin{remark}
Let $G_{0}$ be a subgroup of $GL_{4}(\mathbb{Q}_{p})$. Let $\varphi
\in\boldsymbol{S}\left(  \mathbb{Q}_{p}^{4}\right)  $ and let $\Lambda\in
G_{0}$. We define the action of $\Lambda$ on $\varphi$ by putting%
\[
\left(  \Lambda\varphi\right)  \left(  x\right)  =\varphi\left(  \Lambda
^{-1}x\right)  ,
\]
and the action of $\Lambda$ on a distribution $T\in\boldsymbol{S}^{\prime
}\left(  \mathbb{Q}_{p}^{4}\right)  $ by putting%
\[
\left(  \Lambda T,\varphi\right)  =\left(  T,\Lambda^{-1}\varphi\right)  .
\]
We say that $T$ is invariant under $G_{0}$ if $\Lambda T=T$ for any
$\Lambda\in G_{0}$.
\end{remark}

\begin{lemma}
\label{lemma1}With the above notation, we have $d\mu_{t}=Ad\lambda_{t}$ for
some positive constant $A$.
\end{lemma}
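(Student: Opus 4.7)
The plan is to show that the Gel'fand-Leray measure $d\lambda_t$ is itself invariant under $O(Q)$, and then invoke the uniqueness clause of Proposition \ref{prop1} to conclude that $d\mu_t$ must be a positive constant multiple of $d\lambda_t$.

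To establish $O(Q)$-invariance of $d\lambda_t$, I would start from the identity (\ref{Gelfand-Leray}), namely $dk_0\wedge dk_1\wedge dk_2\wedge dk_3=dQ(k)\wedge\lambda_t$, and pull it back along an arbitrary $\Lambda\in O(Q)$. Because $\Lambda$ is linear with $\det\Lambda=\pm1$ and $Q\circ\Lambda=Q$, one has $\Lambda^{\ast}(dk_{0}\wedge dk_{1}\wedge dk_{2}\wedge dk_{3})=(\det\Lambda)\,dk_{0}\wedge dk_{1}\wedge dk_{2}\wedge dk_{3}$ and $\Lambda^{\ast}dQ=dQ$, so
\[
dk_{0}\wedge dk_{1}\wedge dk_{2}\wedge dk_{3}=dQ(k)\wedge\bigl(\pm\Lambda^{\ast}\lambda_{t}\bigr).
\]
By the uniqueness (modulo $dQ$) of the Gel'fand-Leray form on $V_t$, the restriction of $\pm\Lambda^{\ast}\lambda_t$ to $V_t$ coincides with that of $\lambda_t$. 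Passing to the associated $p$-adic measures (which are defined through $p$-adic absolute values of the coefficient functions in charts, see the formula (\ref{Omega})), the sign disappears, and one obtains $\Lambda_{\ast}d\lambda_{t}=d\lambda_{t}$.

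An alternative and perhaps cleaner route is to verify invariance in charts: cover $V_t$ by the submanifolds $V_t^{(j)}$ of (\ref{Mj}), use (\ref{Omega}) to express $\lambda_t$ locally, and apply the $p$-adic change-of-variables formula to the transformation induced by $\Lambda$ on $\mathbb{Q}_p^3$ after elimination of $k_{l(j)}$. The invariance of Lebesgue measure on $\mathbb{Q}_p^4$ under $\Lambda$ (because $|\det\Lambda|_p=|\pm 1|_p=1$) combined with the chain-rule identity $\partial(Q\circ\Lambda)/\partial k_i=\partial Q/\partial k_i$ reduces the check to a direct Jacobian computation. Once invariance is in hand, Proposition \ref{prop1} furnishes a positive constant $A=A(t)$ with $d\mu_{t}=A\,d\lambda_{t}$, proving the lemma.

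The main subtlety will be the sign issue arising from $\det\Lambda=\pm 1$: at the level of differential forms, $\Lambda^{\ast}\lambda_t$ only agrees with $\lambda_t$ up to a sign, and one must be careful to distinguish between forms and measures. The non-Archimedean setting resolves this automatically since the measure is built from $|\cdot|_{p}$-absolute values of the coefficients; however, this step should be stated explicitly to justify the transition from the form-level identity to the measure-level invariance.
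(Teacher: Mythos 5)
Your proposal is correct and takes essentially the same route as the paper: establish $O(Q)$-invariance of $d\lambda_{t}$ by pulling back the Gel'fand--Leray identity (\ref{Gelfand-Leray}) along $\Lambda$, using $\det\Lambda=\pm1$, $Q\circ\Lambda=Q$, and the uniqueness of the restriction of $\lambda_{t}$ to $V_{t}$, then invoke the uniqueness clause of Proposition \ref{prop1}. Your explicit remark that the sign $\det\Lambda=\pm1$ disappears when passing from the differential form to the associated $p$-adic measure is exactly the step the paper performs (somewhat more tersely) when it concludes $d\lambda_{t}(k)=d\lambda_{t}(y)$ from $\lambda_{t}(k)=(\det\Lambda^{-1})\lambda_{t}(y)$.
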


\begin{proof}
By Remark \ref{nota1} and Proposition \ref{prop1}, it is sufficient to show
that the distribution $\delta\left(  Q\left(  k\right)  -t\right)  $ is
invariant under $O\left(  Q\right)  $, i.e.%
\[%
{\displaystyle\int\limits_{V_{t}}}
\varphi\left(  \Lambda k\right)  d\lambda_{t}\left(  k\right)  =%
{\displaystyle\int\limits_{V_{t}}}
\varphi\left(  k\right)  d\lambda_{t}\left(  k\right)
\]
for any $\Lambda\in O(Q)$ and $\varphi\in\mathbf{S}\left(  \mathbb{Q}_{p}%
^{4}\right)  $. Since $V_{t}$ is invariant under $\Lambda$, it is sufficient
to show that $d\lambda_{t}\left(  k\right)  =d\lambda_{t}\left(  y\right)  $
under $k=\Lambda^{-1}y$, for any $\Lambda\in O(Q)$. To verify this fact we
note that
\[
dk_{0}\wedge dk_{1}\wedge dk_{2}\wedge dk_{3}=\left(  \det\Lambda^{-1}\right)
dy_{0}\wedge dy_{1}\wedge dy_{2}\wedge dy_{3}\text{ and }dQ\left(  k\right)
=dQ(y)
\]
under $k=\Lambda^{-1}y$. Now by (\ref{Gelfand-Leray}) and the fact that the
restriction of $\lambda_{t}$ to $V_{t}$ is unique we have $\lambda_{t}\left(
k\right)  =\left(  \det\Lambda^{-1}\right)  \lambda_{t}\left(  y\right)  $ on
$V_{t}$, i.e. $d\lambda_{t}\left(  k\right)  =d\lambda_{t}\left(  y\right)  $
under $k=\Lambda^{-1}y$ on $V_{t}$.
\end{proof}

\subsubsection{Some additional results on $\delta\left(  Q\left(  k\right)
-t\right)  $}

We now take $t=m^{2}$ with $m\in\mathbb{Q}_{p}^{\times}$. Notice that
$V_{m^{2}}$ has infinitely many points and that $\left(  k_{0},\boldsymbol{k}%
\right)  \in V_{m^{2}}$ if and only if $\left(  -k_{0},\boldsymbol{k}\right)
\in V_{m^{2}}$. In order to exploit this symmetry we need a `notion of
positivity' on $\mathbb{Q}_{p}$. To motivate our definitions consider
$a=p^{-n}ac\left(  a\right)  \in\mathbb{Q}_{p}^{\times}$, then $-a=p^{-n}%
ac(-a)$. Thus, changing the sign of $a$ is equivalent to changing the sign of
its angular component. On the other hand, the equation $x^{2}=a$ has two
solutions if and only if $n$ is even and $\left(  \frac{a_{-n}}{p}\right)
=1$, here $\left(  \frac{\cdot}{p}\right)  $ denotes the Legendre symbol. The
condition $\left(  \frac{a_{-n}}{p}\right)  =1$ means that the equation
$z^{2}\equiv a_{-n}\operatorname{mod}p$ has two solutions, say $\pm z_{0}$,
because $p\neq2$, with $z_{0}\in\left\{  1,\ldots,\frac{p-1}{2}\right\}  $ and
$-z_{0}\in\left\{  \frac{p+1}{2},\ldots,p-1\right\}  $.

We define%
\[
\mathbb{F}_{p}^{+}=\left\{  1,\ldots,\frac{p-1}{2}\right\}  \subset
\mathbb{F}_{p}^{\times}\text{ and }\mathbb{F}_{p}^{-}=\left\{  \frac{p+1}%
{2},\ldots,p-1\right\}  \subset\mathbb{F}_{p}^{\times}\text{.}%
\]

Motivated by the above discussion we introduce the following notion of `positivity'.

\begin{definition}
\label{positivity definition}We say that $a\in\mathbb{Q}_{p}^{\times}$ is
positive \ if $\overline{ac(a)}\in\mathbb{F}_{p}^{+}$, otherwise we declare
$a$ to be negative. We will use the notation $a>0$, in the first case, and
$a<0$ in the second case.
\end{definition}

The reader must be aware that this notion of positivity is not compatible with
the arithmetic operations on $\mathbb{F}_{p}$ neither on $\mathbb{Q}%
_{p}^{\times}$ because these fields cannot be ordered.

We now define the \textit{mass shells} as follows:%
\[
V_{m^{2}}^{+}=\left\{  \left(  k_{0},\boldsymbol{k}\right)  \in V_{m^{2}%
}:k_{0}>0\right\}  \text{ and }V_{m^{2}}^{-}=\left\{  \left(  k_{0}%
,\boldsymbol{k}\right)  \in V_{m^{2}}:k_{0}<0\right\}  .
\]
Hence
\begin{equation}
V_{m^{2}}=V_{m^{2}}^{+}%
{\textstyle\bigsqcup}
V_{m^{2}}^{-}%
{\textstyle\bigsqcup}
\left\{  \left(  k_{0},\boldsymbol{k}\right)  \in V_{m^{2}}:k_{0}=0\right\}  .
\label{partition}%
\end{equation}
Notice that
\[%
\begin{array}
[c]{ccc}%
V_{m^{2}}^{+} & \rightarrow & V_{m^{2}}^{-}\\
&  & \\
\left(  k_{0},\boldsymbol{k}\right)  & \rightarrow & \left(  -k_{0}%
,\boldsymbol{k}\right)
\end{array}
\]
is a bijection. We define
\[%
\begin{array}
[c]{cccc}%
\Pi: & \mathbb{Q}_{p}^{4} & \rightarrow & \mathbb{Q}_{p}^{3}\\
&  &  & \\
& \left(  k_{0},\boldsymbol{k}\right)  & \rightarrow & \boldsymbol{k},
\end{array}
\]
and $\Pi\left(  V_{m^{2}}^{+}\right)  =\Pi\left(  V_{m^{2}}^{-}\right)
:=U_{Q,m}$. Given $\boldsymbol{k}\in U_{Q,m}$, there are two $p$-adic numbers,
$k_{0}>0$ and $-k_{0}<0$, such that $\left(  k_{0},\boldsymbol{k}\right)  $,
$\left(  -k_{0},\boldsymbol{k}\right)  \in V_{m^{2}}$, thus we can define the
following two functions:%

\[%
\begin{array}
[c]{ccc}%
U_{Q,m} & \rightarrow & \mathbb{Q}_{p}^{\times}\\
&  & \\
\boldsymbol{k} & \rightarrow & \sqrt{\boldsymbol{k}\cdot\boldsymbol{k}+m^{2}%
}=:k_{0}%
\end{array}
\text{,\ \ }%
\begin{array}
[c]{ccc}%
U_{Q,m} & \rightarrow & \mathbb{Q}_{p}^{\times}\\
&  & \\
\boldsymbol{k} & \rightarrow & -\sqrt{\boldsymbol{k}\cdot\boldsymbol{k}+m^{2}%
}=:-k_{0}.
\end{array}
\]

Furthermore, we obtain the following description of the sets $V_{m^{2}}^{\pm}$:%

\begin{equation}
V_{m^{2}}^{\pm}=\left\{  \left(  k_{0},\boldsymbol{k}\right)  \in
\mathbb{Q}_{p}^{4}:k_{0}=\pm\sqrt{\boldsymbol{k}\cdot\boldsymbol{k}+m^{2}%
}\text{, for }\boldsymbol{k}\in U_{Q,m}\right\}  . \label{V_m}%
\end{equation}

\begin{lemma}
\label{lemma2}With the above notation the following assertions hold:

(i) $U_{Q,m}$ is an open subset of $\mathbb{Q}_{p}^{3}$;

(ii) the functions $\pm\sqrt{\boldsymbol{k}\cdot\boldsymbol{k}+m^{2}}$ are
$p$-adic analytic on $U_{Q,m}$;

(iii) $U_{Q,m}$ is $p$-adic bianalytic equivalent to each $V_{m^{2}}^{\pm}$,
and $V_{m^{2}}^{\pm}$ are open subsets of $\mathbb{Q}_{p}^{3}$;

(iv)%
\begin{align*}
\lambda_{m^{2}}\left(  k_{0},\boldsymbol{k}\right)   &  \mid_{V_{m^{2}}^{\pm}%
}=\frac{dk_{1}\wedge dk_{2}\wedge dk_{3}}{\pm2\sqrt{\boldsymbol{k}%
\cdot\boldsymbol{k}+m^{2}}}\mid_{U_{Q,m}}\text{ and}\\
d\lambda_{m^{2}}\left(  k_{0},\boldsymbol{k}\right)   &  \mid_{V_{m^{2}}^{\pm
}}=\frac{d^{3}\boldsymbol{k}}{\left\vert \sqrt{\boldsymbol{k}\cdot
\boldsymbol{k}+m^{2}}\right\vert _{p}}\mid_{U_{Q,m}},
\end{align*}
where $d^{3}\boldsymbol{k}$ is the normalized Haar measure of $\mathbb{Q}%
_{p}^{3}$;

(v) $%
{\displaystyle\int\limits_{\left\{  \left(  k_{0},\boldsymbol{k}\right)  \in
V_{m^{2}}:k_{0}=0\right\}  }}
\varphi\left(  k_{0},\boldsymbol{k}\right)  d\lambda_{m^{2}}\left(
k_{0},\boldsymbol{k}\right)  =0$ for any $\varphi\left(  k_{0},\boldsymbol{k}%
\right)  \in\mathbf{S}\left(  \mathbb{Q}_{p}^{4}\right)  $.
\end{lemma}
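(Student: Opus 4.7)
The plan is to reduce all five parts to properties of the polynomial $P(\boldsymbol{k}) := \boldsymbol{k}\cdot\boldsymbol{k} + m^{2}$ on $\mathbb{Q}_p^3$ and of squares in $\mathbb{Q}_p^\times$. By unwinding Definition \ref{positivity definition}, a $p$-adic number $a\in\mathbb{Q}_p^\times$ is a square iff $\operatorname{ord}(a)$ is even and $\overline{ac(a)}$ is a quadratic residue mod $p$; moreover, $a$ is the square of a positive element iff (in addition) $\overline{ac(\sqrt{a})}\in\mathbb{F}_p^+$. In either sign convention the set of admissible $a$'s is characterized by locally constant conditions on $\operatorname{ord}$ and $\overline{ac}$, hence is open in $\mathbb{Q}_p^\times$. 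Observing that $U_{Q,m}=P^{-1}(S)$ for this open set $S$, and that $P$ is continuous, gives (i).

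For (ii) I would invoke Hensel's lemma (equivalently, the binomial expansion of $(1+x)^{1/2}$, which converges for $|x|_p<1$ when $p\neq 2$): if $\boldsymbol{k}_0\in U_{Q,m}$ and $P(\boldsymbol{k}_0)=u_0^2$ with $u_0\in\mathbb{Q}_p^\times$, then for $\boldsymbol{k}$ in a sufficiently small ball around $\boldsymbol{k}_0$ one has $P(\boldsymbol{k})=u_0^2(1+\varepsilon(\boldsymbol{k}))$ with $|\varepsilon(\boldsymbol{k})|_p<1$ and $\varepsilon$ a polynomial in $\boldsymbol{k}$; thus $\pm\sqrt{P(\boldsymbol{k})}=\pm u_0\sum_{j\ge 0}\binom{1/2}{j}\varepsilon(\boldsymbol{k})^{j}$ is a $p$-adic analytic function in a neighborhood of $\boldsymbol{k}_0$. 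Continuity of the sign convention (again, $\operatorname{ord}$ and $\overline{ac}$ are locally constant on $\mathbb{Q}_p^\times$) ensures that we can choose the sign globally on each sheet. Part (iii) then follows immediately: $\Pi$ restricted to $V_{m^{2}}^{\pm}$ is analytic (a restriction of a linear projection), and its inverse $\boldsymbol{k}\mapsto(\pm\sqrt{P(\boldsymbol{k})},\boldsymbol{k})$ is analytic by (ii), so both maps yield a bianalytic equivalence with the open set $U_{Q,m}\subset\mathbb{Q}_p^3$ (which is the sense in which $V_{m^2}^\pm$ is an open subset of $\mathbb{Q}_p^3$).

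For (iv), I plug the chart from (iii) into the description of $\lambda_t$ given in (\ref{Gelfand-Leray})–(\ref{Omega}). Because $\partial Q/\partial k_0=2k_0=\pm 2\sqrt{P(\boldsymbol{k})}$ on $V_{m^{2}}^{\pm}$ and $|2|_p=1$ (recall $p\neq 2$), both the form version $\lambda_{m^2}|_{V_{m^2}^{\pm}}=(dk_1\wedge dk_2\wedge dk_3)/(\pm 2\sqrt{P(\boldsymbol{k})})$ and the measure version $d\lambda_{m^2}|_{V_{m^2}^{\pm}}=d^3\boldsymbol{k}/|\sqrt{P(\boldsymbol{k})}|_p$ drop out directly.

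Finally, for (v) I argue that the zero set $Z:=\{k\in V_{m^{2}}:k_{0}=0\}$ is $\lambda_{m^2}$-negligible. On $Z$ we have $k_1^2+k_2^2+k_3^2=-m^{2}\neq 0$, so at every point of $Z$ some $k_j$, $j\in\{1,2,3\}$, is nonzero; cover $Z$ by the three open subsets $V_{m^2}\cap\{k_j\neq 0\}$. In each such chart, by the analogue of (\ref{Omega}) obtained by solving for $k_j$, the measure $d\lambda_{m^2}$ has the form $dk_0\, dk_i\, dk_\ell/|2k_j|_p$, and the slice $\{k_0=0\}$ in the parameter space $(k_0,k_i,k_\ell)\in\mathbb{Q}_p^3$ has Haar measure zero. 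Hence $\lambda_{m^{2}}(Z)=0$, and the integral vanishes for every test function. The main technical obstacle is part (ii): one must verify that the square-root series can actually be made analytic in an honest neighborhood uniformly with a consistent sign, which is where the Hensel/binomial argument plus the odd-$p$ hypothesis is essential.
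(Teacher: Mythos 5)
Your proof is correct, and it reaches the same endpoint as the paper's but by somewhat more explicit means. The paper gets (i)--(ii) by citing the atlas already built for $V_t$ via the non-Archimedean implicit function theorem: each point of $V_{m^{2}}^{+}$ lies in a chart $U_{+}=U_{+}'\times U_{+}''$ on which $k_{0}=h_{+}(\boldsymbol{k})$ for an analytic $h_{+}$, whence $h_{+}=\sqrt{\boldsymbol{k}\cdot\boldsymbol{k}+m^{2}}$ on $U_{+}''$ and $U_{Q,m}=\bigcup U_{+}''$ is open. Your route replaces this with two concrete ingredients: openness of $U_{Q,m}$ as the preimage of the (open, since membership is governed by the locally constant data $\operatorname{ord}$ and $\overline{ac}$) set of nonzero squares under the polynomial $\boldsymbol{k}\mapsto\boldsymbol{k}\cdot\boldsymbol{k}+m^{2}$, and analyticity of the root via Hensel/the binomial series, which for odd $p$ has integral coefficients and so converges on $|x|_{p}<1$. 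These are really the same theorem in different clothing (the implicit function theorem for $k_{0}^{2}=P(\boldsymbol{k})$ at a point with $2k_{0}\neq0$ \emph{is} Hensel's lemma), but your version has the minor advantage of not depending on which coordinate the ambient atlas happened to solve for, and it makes the sign-selection issue explicit. For (iv)--(v) the paper says only ``direct calculation from (\ref{Gelfand-Leray})''; you actually perform it, and in particular your argument for (v) --- covering $\{k_{0}=0\}\cap V_{m^{2}}$ by charts solving for some $k_{j}\neq0$, $j\in\{1,2,3\}$ (possible since $\boldsymbol{k}\cdot\boldsymbol{k}=-m^{2}\neq0$ there), and noting that the slice $k_{0}=0$ is Haar-null in the chart coordinates of (\ref{Omega}) --- is exactly the calculation the paper leaves to the reader.
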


\begin{proof}
Take a point $\left(  k_{0},\boldsymbol{k}\right)  \in V_{m^{2}}^{+}$, \ then
$\left(  k_{0},\boldsymbol{k}\right)  \in V_{t}^{\left(  j\right)  }$\ for
some $j$, see (\ref{Mj}), thus there exist an open compact subset $U_{+}%
=U_{+}^{\prime}\times U_{+}^{\prime\prime}$ containing $\left(  k_{0}%
,\boldsymbol{k}\right)  $ and a $p$-adic analytic function $h_{+}%
:U_{+}^{\prime\prime}\rightarrow U_{+}^{\prime}$ such that%
\begin{equation}
V_{m^{2}}^{+}\cap U=\left\{  \left(  k_{0},\boldsymbol{k}\right)  \in
U_{+}:k_{0}=h_{+}\left(  \boldsymbol{k}\right)  \text{ with }\boldsymbol{k}\in
U_{+}^{\prime\prime}\right\}  . \label{V_m_2}%
\end{equation}

Now by (\ref{V_m}), we have
\[
h_{+}\left(  \boldsymbol{k}\right)  \mid_{U_{+}^{\prime\prime}}=\sqrt
{\boldsymbol{k}\cdot\boldsymbol{k}+m^{2}}\mid_{U_{+}^{\prime\prime}},
\]
which implies that $\sqrt{\boldsymbol{k}\cdot\boldsymbol{k}+m^{2}}$ is a
$p$-adic analytic function on $U_{Q,m}$, which is an open subset of
$\mathbb{Q}_{p}^{3}$ since it is the union of all the $U_{+}^{\prime\prime}$
which are open. In this way we establish (i)-(ii).

We now prove (iii). By (ii)
\[%
\begin{array}
[c]{ccc}%
U_{Q,m} & \rightarrow & V_{m^{2}}^{\pm}\\
&  & \\
\boldsymbol{k} & \rightarrow & \left(  \pm\sqrt{\boldsymbol{k}\cdot
\boldsymbol{k}+m^{2}},\boldsymbol{k}\right)  =:i_{\pm}\left(  \boldsymbol{k}%
\right)
\end{array}
\]
are $p$-adic bianalytic mappings, and by (i) $V_{m^{2}}^{\pm}$ are open
subsets of $\mathbb{Q}_{p}^{3}$.

The formulas (iv)-(v) follow from (\ref{Gelfand-Leray}) by a direct calculation.
\end{proof}

\begin{lemma}
\label{lemma3}(i) Each of the spaces $\boldsymbol{S}(V_{m^{2}}^{\pm})$ is
isomorphic$\ $to $\boldsymbol{S}(U_{Q,m})$ as $\mathbb{C}$-vector space.

(ii) If $\phi:U_{Q,m}\rightarrow\mathbb{C}$ is a function with compact
support, then
\[%
{\displaystyle\int\limits_{U_{Q,m}}}
\phi\left(  \boldsymbol{k}\right)  \frac{d^{3}\boldsymbol{k}}{\left\vert
\sqrt{\boldsymbol{k}\cdot\boldsymbol{k}+m^{2}}\right\vert _{p}}=%
{\displaystyle\int\limits_{V_{m^{2}}^{\pm}}}
\left(  \phi\circ i_{\pm}^{-1}\right)  \left(  \boldsymbol{k}\right)
d\lambda_{m^{2}}\left(  k_{0},\boldsymbol{k}\right)  \text{.}%
\]
(iii) If $m\in\mathbb{Q}_{p}^{\times}$ and $\varphi:\mathbb{Q}_{p}%
^{4}\rightarrow\mathbb{C}$ is a function with compact support, then%
\begin{align}%
{\displaystyle\int\limits_{^{V_{m^{2}}}}}
\varphi\left(  k\right)  d\lambda_{m^{2}}\left(  k\right)   &  =%
{\displaystyle\int\limits_{U_{Q,m}}}
\varphi\left(  \sqrt{\boldsymbol{k}\cdot\boldsymbol{k}+m^{2}},\boldsymbol{k}%
\right)  \frac{d^{3}\boldsymbol{k}}{\left\vert \sqrt{\boldsymbol{k}%
\cdot\boldsymbol{k}+m^{2}}\right\vert _{p}}\nonumber\\
&  +%
{\displaystyle\int\limits_{U_{Q,m}}}
\varphi\left(  -\sqrt{\boldsymbol{k}\cdot\boldsymbol{k}+m^{2}},\boldsymbol{k}%
\right)  \frac{d^{3}\boldsymbol{k}}{\left\vert \sqrt{\boldsymbol{k}%
\cdot\boldsymbol{k}+m^{2}}\right\vert _{p}}. \label{formula1}%
\end{align}

\end{lemma}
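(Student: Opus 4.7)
The plan is to derive all three assertions directly from Lemma~\ref{lemma2}, which already furnishes the $p$-adic bianalytic maps $i_{\pm}:U_{Q,m}\to V_{m^{2}}^{\pm}$ and the explicit description of $d\lambda_{m^{2}}$ on each mass shell.

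For (i), I would take the isomorphism to be the pullback $\Phi_{\pm}:\boldsymbol{S}(V_{m^{2}}^{\pm})\to\boldsymbol{S}(U_{Q,m})$ given by $\psi\mapsto\psi\circ i_{\pm}$, with inverse $\phi\mapsto\phi\circ i_{\pm}^{-1}$. Linearity is immediate. Since $i_{\pm}$ is continuous, the pullback of a locally constant function is locally constant; since by Lemma~\ref{lemma2}(iii) each $i_{\pm}$ is a homeomorphism between open subsets of $\mathbb{Q}_{p}^{3}$, compact support in $V_{m^{2}}^{\pm}$ corresponds exactly to compact support in $U_{Q,m}$. The two maps are mutually inverse, giving the $\mathbb{C}$-vector space isomorphism.

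For (ii), the identity is essentially the Gel'fand--Leray measure formula of Lemma~\ref{lemma2}(iv) expressed as a change of variables along the chart $i_{\pm}^{-1}:V_{m^{2}}^{\pm}\to U_{Q,m}$. I would perform the substitution $(k_{0},\boldsymbol{k})=i_{\pm}(\boldsymbol{k})$ inside the right-hand integral and invoke the standard $p$-adic change-of-variables formula on $p$-adic analytic manifolds (see, e.g., \cite[Section~7.4]{Igusa}); this identifies $d\lambda_{m^{2}}$ on $V_{m^{2}}^{\pm}$ with $d^{3}\boldsymbol{k}/|\sqrt{\boldsymbol{k}\cdot\boldsymbol{k}+m^{2}}|_{p}$ on $U_{Q,m}$ and yields the claimed equality.

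For (iii), I would start from the disjoint decomposition~(\ref{partition}) and split the integral over $V_{m^{2}}$ into three pieces. The integral over $\{(0,\boldsymbol{k})\in V_{m^{2}}\}$ vanishes by Lemma~\ref{lemma2}(v). To each of the two remaining integrals over $V_{m^{2}}^{\pm}$ I apply (ii) with $\phi=\varphi\circ i_{\pm}$, so that $\phi\circ i_{\pm}^{-1}$ agrees with $\varphi$ on $V_{m^{2}}^{\pm}$ and $\phi(\boldsymbol{k})=\varphi(\pm\sqrt{\boldsymbol{k}\cdot\boldsymbol{k}+m^{2}},\boldsymbol{k})$; this reproduces precisely the two summands on the right-hand side of~(\ref{formula1}).

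No step is expected to present a serious obstacle; the proof is essentially bookkeeping once Lemma~\ref{lemma2} is in place. The only point that warrants mild care is the preservation of compact support in (i), which reduces to the fact that a homeomorphism between open subsets of $\mathbb{Q}_{p}^{3}$ sends compact sets to compact sets.
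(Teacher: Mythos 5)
Your proposal is correct and follows essentially the same route as the paper: part (i) via the pullback along the bianalytic maps $i_{\pm}$ of Lemma \ref{lemma2}(iii), part (ii) via the change of variables encoded in Lemma \ref{lemma2}(iv), and part (iii) by splitting the integral according to the decomposition (\ref{partition}) and discarding the $k_{0}=0$ piece by Lemma \ref{lemma2}(v). The paper's proof is just a terser version of the same bookkeeping.
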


\begin{proof}
(i) Let $\phi_{\pm}$ be a function in $\boldsymbol{S}(V_{m^{2}}^{\pm})$, by
applying Lemma \ref{lemma2} (iii), we have $\phi_{\pm}\circ i_{\pm}%
\in\boldsymbol{S}(U_{Q,m})$. Conversely, if $\varphi\in\boldsymbol{S}%
(U_{Q,m})$, then, by Lemma \ref{lemma2} (iii), $\varphi\circ i_{\pm}^{-1}%
\in\boldsymbol{S}(V_{m^{2}}^{\pm})$. (ii) The formula follows from (i) by
applying Lemma \ref{lemma2} (iv). (iii) The formula follows from
(\ref{partition}) by applying Lemma \ref{lemma2} (iii)-(iv)-(v).
\end{proof}

\begin{remark}
\label{nota2} We set%
\[
\delta_{\pm}\left(  Q\left(  k\right)  -m^{2}\right)  :=\delta\left(  Q\left(
k\right)  -m^{2}\right)  \mid_{V_{m^{2}}^{\pm}}.
\]
If we take $\Lambda_{0}:=\left[
\begin{array}
[c]{cc}%
-1 & {\LARGE 0}\\
{\LARGE 0} & I_{3\times3}%
\end{array}
\right]  =\Lambda_{0}^{-1}\in O\left(  Q\right)  $, then
\[
\delta_{-}\left(  Q\left(  k\right)  -m^{2}\right)  =\Lambda_{0}\delta
_{+}\left(  Q\left(  k\right)  -m^{2}\right)  .
\]
Notice that instead of $\Lambda_{0}$ we can use any $\Lambda$ satisfying
$\Lambda\left(  V_{m^{2}}^{+}\right)  =V_{m^{2}}^{-}$.

Now formula (\ref{formula1}) can be written as
\[
\delta\left(  Q\left(  k\right)  -m^{2}\right)  =\delta_{+}\left(  Q\left(
k\right)  -m^{2}\right)  +\delta_{-}\left(  Q\left(  k\right)  -m^{2}\right)
.
\]

\end{remark}

\subsubsection{The $p$-adic restricted Lorentz group}

\begin{definition}
\label{Lorentz group}We define the $p$-adic restricted Lorentz group
$\mathcal{L}_{+}^{\uparrow}$ to be the largest subgroup of $SO(Q)$ such that
$\mathcal{L}_{+}^{\uparrow}\left(  V_{m^{2}}^{\pm}\right)  =V_{m^{2}}^{\pm}$.
\end{definition}

Notice that $\mathcal{L}_{+}^{\uparrow}$ is a non-trivial subgroup of $SO(Q)$.
Indeed, take $\Lambda$ in
\[
SO\left(  3\right)  =\left\{  R\in GL_{3}\left(  \mathbb{Q}_{p}\right)
:R^{T}=R^{-1}\text{, }\det R=1\right\}  ,
\]

and define%
\[
\widetilde{\Lambda}=\left[
\begin{array}
[c]{cc}%
1 & {\LARGE 0}\\
{\LARGE 0} & \Lambda
\end{array}
\right]  .
\]
Then
\[
\left(  \widetilde{\Lambda}\right)  ^{T}=\left[
\begin{array}
[c]{cc}%
1 & {\LARGE 0}\\
{\LARGE 0} & \Lambda^{-1}%
\end{array}
\right]  \text{ and }\left(  \widetilde{\Lambda}\right)  ^{T}G\widetilde
{\Lambda}=G\text{, i.e. }\widetilde{\Lambda}\in SO(Q)\text{,}%
\]
and since $\widetilde{\Lambda}\boldsymbol{k}\cdot\widetilde{\Lambda
}\boldsymbol{k}=\boldsymbol{k}\cdot\boldsymbol{k}$ we have $\widetilde
{\Lambda}\left(  V_{m^{2}}^{\pm}\right)  =V_{m^{2}}^{\pm}$.

At the moment, we do not know if $\mathcal{L}_{+}^{\uparrow}=\left\{
1\right\}  \times SO(3)$. It seems that this depends on $\mathbb{Q}_{p}$,
which can be replaced for any locally compact field of characteristic
different from $2$.

\begin{lemma}
\label{lemma4}The distributions $\delta_{\pm}\left(  Q\left(  k\right)
-m^{2}\right)  $ are invariant under $\mathcal{L}_{+}^{\uparrow}$.
\end{lemma}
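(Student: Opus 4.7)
The plan is to reduce the statement to the invariance already proved in Lemma \ref{lemma1} by exploiting two facts: $\mathcal{L}_{+}^{\uparrow}$ is by definition contained in $SO(Q)\subset O(Q)$, and every $\Lambda\in\mathcal{L}_{+}^{\uparrow}$ preserves $V_{m^{2}}^{\pm}$ setwise.

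First, fix $\Lambda\in\mathcal{L}_{+}^{\uparrow}$ and $\varphi\in\mathbf{S}(\mathbb{Q}_{p}^{4})$. By the action convention of the preceding remark and the identification of $\delta_{+}(Q(k)-m^{2})$ with the measure $d\lambda_{m^{2}}$ restricted to $V_{m^{2}}^{+}$, I have
\[
\bigl(\Lambda\,\delta_{+}(Q(k)-m^{2}),\varphi\bigr)=\int_{V_{m^{2}}^{+}}\varphi(\Lambda k)\,d\lambda_{m^{2}}(k).
\]
The substitution $y=\Lambda k$ sends $V_{m^{2}}^{+}$ to $\Lambda(V_{m^{2}}^{+})=V_{m^{2}}^{+}$ by the defining property of $\mathcal{L}_{+}^{\uparrow}$, and the Jacobian computation from the proof of Lemma \ref{lemma1} gives $d\lambda_{m^{2}}(k)=d\lambda_{m^{2}}(y)$ on $V_{m^{2}}$ (here one uses $\det\Lambda=1$, which is why restricting from $O(Q)$ to $SO(Q)$ is natural). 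Substituting produces $\int_{V_{m^{2}}^{+}}\varphi(y)\,d\lambda_{m^{2}}(y)=\bigl(\delta_{+}(Q(k)-m^{2}),\varphi\bigr)$, yielding invariance of $\delta_{+}$. The argument for $\delta_{-}$ is identical with $V_{m^{2}}^{+}$ replaced by $V_{m^{2}}^{-}$.

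An equivalent, more structural derivation runs as follows. From Remark \ref{nota2} (or Lemma \ref{lemma3}(iii) and the discussion preceding it) we have the decomposition $\delta(Q(k)-m^{2})=\delta_{+}(Q(k)-m^{2})+\delta_{-}(Q(k)-m^{2})$, and Lemma \ref{lemma1} gives $\Lambda\,\delta(Q(k)-m^{2})=\delta(Q(k)-m^{2})$ since $\mathcal{L}_{+}^{\uparrow}\subset O(Q)$. Now $\Lambda\,\delta_{\pm}(Q(k)-m^{2})$ is a distribution supported on $\Lambda(V_{m^{2}}^{\pm})=V_{m^{2}}^{\pm}$, and the sets $V_{m^{2}}^{+}$ and $V_{m^{2}}^{-}$ are disjoint (their union with $\{k_{0}=0\}\cap V_{m^{2}}$ is all of $V_{m^{2}}$ by (\ref{partition}), and the hyperplane slice carries zero $\lambda_{m^{2}}$-mass by Lemma \ref{lemma2}(v)). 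Therefore the decomposition of $\delta(Q(k)-m^{2})$ into a part supported on $V_{m^{2}}^{+}$ and one supported on $V_{m^{2}}^{-}$ is unique, and comparing
\[
\Lambda\,\delta_{+}+\Lambda\,\delta_{-}=\delta_{+}+\delta_{-}
\]
forces $\Lambda\,\delta_{\pm}=\delta_{\pm}$.

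Either approach is essentially a bookkeeping exercise once Lemma \ref{lemma1} is in hand; the only genuine point is verifying that $\Lambda$ restricted to $V_{m^{2}}^{+}$ is a measure-preserving bianalytic self-map, which is immediate from the definition of $\mathcal{L}_{+}^{\uparrow}$ together with the invariance of the Gel'fand–Leray form proved in Lemma \ref{lemma1}. I do not anticipate a substantive obstacle; the proof is short.
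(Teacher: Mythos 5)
Your first argument is exactly the paper's proof: pair $\Lambda\,\delta_{\pm}(Q(k)-m^{2})$ with a test function, use $\Lambda(V_{m^{2}}^{\pm})=V_{m^{2}}^{\pm}$ from the definition of $\mathcal{L}_{+}^{\uparrow}$, and invoke the $O(Q)$-invariance of $d\lambda_{m^{2}}$ established in the proof of Lemma \ref{lemma1}. The alternative derivation via uniqueness of the decomposition into parts supported on the disjoint sets $V_{m^{2}}^{\pm}$ is also sound, and your only inessential slip is the parenthetical suggesting $\det\Lambda=1$ is needed for measure invariance --- the paper's Lemma \ref{lemma1} gives invariance under all of $O(Q)$ since $\vert\det\Lambda\vert_{p}=1$ there.
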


\begin{proof}
Consider first $\delta_{\pm}\left(  Q\left(  k\right)  -m^{2}\right)  $. Take
$\Lambda\in\mathcal{L}_{+}^{\uparrow}$, then%
\[
\left(  \Lambda\delta_{+}\left(  Q\left(  k\right)  -m^{2}\right)
,\varphi\right)  =%
{\displaystyle\int\limits_{V_{m^{2}}^{\pm}}}
\varphi\left(  \Lambda k\right)  d\lambda_{m^{2}}\left(  k\right)  =%
{\displaystyle\int\limits_{V_{m^{2}}^{\pm}}}
\varphi\left(  k\right)  d\lambda_{m^{2}}\left(  k\right)
\]
because $\Lambda\left(  V_{m^{2}}^{+}\right)  =V_{m^{2}}^{+}$ and
$d\lambda_{m^{2}}$ is invariant under any element of $O(Q)$, see proof of
Lemma \ref{lemma1}.
\end{proof}

\section{A non-Archimedean analog of the Klein-Gordon Equation}

Given a positive real number $\alpha$ and a nonzero $p$-adic number $m$, we
define the pseudodifferential operator%
\[%
\begin{array}
[c]{ccc}%
\mathbf{S}\left(  \mathbb{Q}_{p}^{4}\right)  & \rightarrow & C\left(
\mathbb{Q}_{p}^{4}\right)  \cap L^{2}\left(  \mathbb{Q}_{p}^{4}\right) \\
&  & \\
\varphi & \rightarrow & \square_{\alpha,m}\varphi,
\end{array}
\]
where $\left(  \square_{\alpha,m}\varphi\right)  \left(  x\right)
:=\mathcal{F}_{k\rightarrow x}^{-1}\left[  \left\vert \left[  k,k\right]
-m^{2}\right\vert _{p}^{\alpha}\mathcal{F}_{x\rightarrow k}\varphi\right]  $.

We set $\mathcal{E}_{Q,m}\left(  \mathbb{Q}_{p}^{4}\right)  :=\mathcal{E}%
_{Q,m}$ to be the subspace of $\mathbf{S}^{\prime}\left(  \mathbb{Q}_{p}%
^{4}\right)  $ consisting of the distributions $T$ such that the product
$\left\vert \left[  k,k\right]  -m^{2}\right\vert _{p}^{\alpha}\mathcal{F}T$
exists in $\mathbf{S}^{\prime}\left(  \mathbb{Q}_{p}^{4}\right)  $, here
$\left\vert \left[  k,k\right]  -m^{2}\right\vert _{p}^{\alpha}$ denotes the
distribution $\varphi\rightarrow\int_{\mathbb{Q}_{p}^{4}}\left\vert \left[
k,k\right]  -m^{2}\right\vert _{p}^{\alpha}\varphi\left(  k\right)  d^{4}k$.
Notice that $\mathcal{E}\left(  \mathbb{Q}_{p}^{4}\right)  $, the space of
locally constant functions, is contained in $\mathcal{E}_{Q,m}$. We consider
$\mathcal{E}_{Q,m}$ as topological space with the topology inherited from
$\mathbf{S}^{\prime}\left(  \mathbb{Q}_{p}^{4}\right)  $.

\begin{definition}
\textit{A weak solution of}%
\begin{equation}
\square_{\alpha,m}T=S\text{, with }S\in\mathbf{S}^{\prime}\left(
\mathbb{Q}_{p}^{4}\right)  \text{,} \label{Eq1}%
\end{equation}
is a distribution $T\in\mathcal{E}_{Q,m}\left(  \mathbb{Q}_{p}^{4}\right)  $
satisfying (\ref{Eq1}).
\end{definition}

For a subset $U$ of $\mathbb{Q}_{p}^{4}$ we denote by $1_{U}$ its
characteristic function.

\begin{lemma}
\label{lemma4A}Let $T$, $S\in\mathbf{S}^{\prime}\left(  \mathbb{Q}_{p}%
^{4}\right)  $. The following assertions are equivalent:

(i) there exists $W\in\mathbf{S}^{\prime}\left(  \mathbb{Q}_{p}^{4}\right)  $
such that $TS=W;$

(ii) for each $x\in\mathbb{Q}_{p}^{4}$, there exists an open compact subset
$U$ containing $x$ so that for each each $k\in\mathbb{Q}_{p}^{4}$:%
\[
\mathcal{F}\left[  1_{U}W\right]  \left(  k\right)  :=%
{\displaystyle\int\limits_{\mathbb{Q}_{p}^{4}}}
\mathcal{F}\left[  1_{U}T\right]  \left(  l\right)  \mathcal{F}\left[
1_{U}S\right]  \left(  k-l\right)  d^{4}l
\]
exists.
\end{lemma}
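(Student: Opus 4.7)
The plan is to exploit the fact that in the $p$-adic setting, whenever $U\subset\mathbb{Q}_p^4$ is open and compact one has $1_U\in\mathbf{S}(\mathbb{Q}_p^4)$; consequently for any $R\in\mathbf{S}'(\mathbb{Q}_p^4)$ the truncation $1_U R$ is a well-defined compactly supported distribution, and its Fourier transform $\mathcal{F}[1_U R]$ is a locally constant function on $\mathbb{Q}_p^4$ (the $p$-adic Paley--Wiener phenomenon). This converts the distributional product locally into a convolution of locally constant functions and reduces the equivalence to a Fourier-side statement.

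For the implication $(i)\Rightarrow(ii)$, assume $TS=W$ in $\mathbf{S}'(\mathbb{Q}_p^4)$. Fix $x\in\mathbb{Q}_p^4$ and pick any open compact neighborhood $U$ of $x$, say a ball $B_\gamma^4(x)$. The idempotency $1_U^2=1_U$, together with the standing hypothesis that $TS$ is defined, yields $(1_U T)(1_U S)=1_U W$ in $\mathbf{S}'(\mathbb{Q}_p^4)$. Both factors on the left have compact support, so their Fourier transforms are the locally constant functions $\mathcal{F}[1_U T]$ and $\mathcal{F}[1_U S]$; applying $\mathcal{F}$ to both sides and interpreting the Fourier transform of a product of compactly supported distributions as the convolution of their Fourier transforms produces exactly the integral formula of (ii), simultaneously establishing its pointwise existence.

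For the converse $(ii)\Rightarrow(i)$, build $W$ locally. For each $x\in\mathbb{Q}_p^4$ let $U_x$ be an open compact neighborhood furnished by (ii), and define $W_x\in\mathbf{S}'(\mathbb{Q}_p^4)$ with support in $U_x$ by
\[
\langle W_x,\varphi\rangle:=\int_{\mathbb{Q}_p^4}\mathcal{F}[1_{U_x}W](k)\,\mathcal{F}^{-1}[\varphi](k)\,d^4k,\qquad\varphi\in\mathbf{S}(\mathbb{Q}_p^4),
\]
where $\mathcal{F}[1_{U_x}W]$ is the locally constant function produced by the convolution in (ii). Continuity of $W_x$ on $\mathbf{S}(\mathbb{Q}_p^4)$ is immediate from Parseval together with the fact that $\mathcal{F}^{-1}\varphi\in\mathbf{S}(\mathbb{Q}_p^4)$ has compact support, so the integral is effectively taken over a compact set. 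Consistency between different choices $U_x\subset U_y$ follows from $1_{U_x}\cdot 1_{U_y}=1_{U_x}$, and since every Bruhat--Schwartz function is a finite sum of ones whose supports lie in a single ball, one can glue the family $\{W_x\}$ into a single distribution $W$ with $TS=W$.

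The main obstacle will be the rigorous justification of two interlocking facts: the associative identity $(1_U T)(1_U S)=1_U(TS)$ under the bare hypothesis that $TS$ exists, and the Fourier-side identity $\mathcal{F}[AB]=\mathcal{F}[A]\ast\mathcal{F}[B]$ for compactly supported $A,B$, which requires verifying that the convolution integrand is locally integrable. Both rely on the finite order of compactly supported $p$-adic distributions and on the local constancy of their Fourier transforms, and it is precisely here that condition (ii) plays its role, guaranteeing convergence at every $k$.
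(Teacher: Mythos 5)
Your proposal is correct and follows essentially the same route as the paper: localize $T$, $S$, $W$ by multiplying with characteristic functions of compact open sets, pass to the Fourier side where the product becomes a convolution, and recover the global statement from the local pieces since a distribution is determined by its restrictions to a countable open covering. The two facts you flag as ``the main obstacle'' are exactly the standard results the paper invokes by citation rather than proof --- associativity/commutativity of the distributional product when one factor has compact support (Taibleson, Theorem 3.19), which gives $(1_{U}T)(1_{U}S)=1_{U}(TS)$, and the Vladimirov--Volovich--Zelenov criterion that $TS$ exists if and only if $\mathcal{F}[T]\ast\mathcal{F}[S]$ exists with $\mathcal{F}[TS]=\mathcal{F}[T]\ast\mathcal{F}[S]$ --- so no new argument is needed there.
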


\begin{proof}
Any distribution is uniquely determined by its restrictions to any countable
open covering of $\mathbb{Q}_{p}^{n}$, see e.g. \cite[p. 89]{V-V-Z}. On the
other hand, the product $TS$ exists if and only if $\mathcal{F}\left[
T\right]  \ast\mathcal{F}\left[  S\right]  $ exists, and in this case
$\mathcal{F}\left[  TS\right]  =\mathcal{F}\left[  T\right]  \ast
\mathcal{F}\left[  S\right]  $, see e.g. \cite[p. 115]{V-V-Z}. Assume that
$TS=W$ exists and take a countable covering $\left\{  U_{i}\right\}
_{i\in\mathbb{N}}$ of $\mathbb{Q}_{p}^{4}$ by open and compact subsets, then
$TS\mid_{U_{i}}=W\mid_{U_{i}}$ i.e. $1_{Ui}TS=1_{Ui}W$. We recall that \ the
product of a finite number of distributions involving at least one
distribution with compact support is associative and commutative, see e.g.
\cite[Theorem 3.19]{Taibleson}, then
\[
1_{Ui}TS=1_{Ui}\left(  1_{Ui}TS\right)  =\left(  1_{Ui}T\right)  \left(
1_{Ui}S\right)  =T\mid_{U_{i}}S\mid_{U_{i}}=W\mid_{U_{i}}.
\]

Now for each $x\in\mathbb{Q}_{p}^{4}$, there exists an open compact subset
$U_{i}$ containing $x$ such that $\mathcal{F}\left[  T\mid_{U_{i}}\right]
\ast\mathcal{F}\left[  S\mid_{U_{i}}\right]  $ $=\mathcal{F}\left[
1_{Ui}T\right]  \ast\mathcal{F}\left[  1_{Ui}S\right]  =\mathcal{F}\left[
1_{Ui}W\right]  $. Conversely, if for each $x$ there exist an open compact
subset $U_{i}$ containing $x$ (from this we get countable subcovering of
$\mathbb{Q}_{p}^{4}$ also denoted as $\left\{  U_{i}\right\}  _{i\in
\mathbb{N}}$) such that $\mathcal{F}\left[  T\mid_{U_{i}}\right]
\ast\mathcal{F}\left[  S\mid_{U_{i}}\right]  $ $=\mathcal{F}\left[
W\mid_{U_{i}}\right]  $ i.e. $T\mid_{U_{i}}S\mid_{U_{i}}=W\mid_{U_{i}}$exists,
then $TS=W$.
\end{proof}

\begin{corollary}
\label{cor1}If $TS$ exists, then supp$\left(  TS\right)  \subseteq
$supp$\left(  T\right)  \cap$supp$\left(  S\right)  $.
\end{corollary}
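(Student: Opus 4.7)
The plan is to reduce the claim to the local identity $1_{U}(TS)=(1_{U}T)(1_{U}S)$ that already appeared inside the proof of Lemma \ref{lemma4A}, and then use the defining property of the support of a distribution.

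Fix $x\in\mathbb{Q}_{p}^{4}\smallsetminus\bigl(\mathrm{supp}(T)\cap\mathrm{supp}(S)\bigr)$. Then either $x\notin\mathrm{supp}(T)$ or $x\notin\mathrm{supp}(S)$; by symmetry assume the former. By definition of support, there exists an open compact neighborhood $U$ of $x$ which is disjoint from $\mathrm{supp}(T)$, so that $T\mid_{U}=0$, equivalently $1_{U}T=0$ as elements of $\mathbf{S}^{\prime}(\mathbb{Q}_{p}^{4})$.

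Next I would invoke the associativity/commutativity of products of distributions when at least one factor has compact support (\cite[Theorem 3.19]{Taibleson}), exactly as in the proof of Lemma \ref{lemma4A}: since $1_{U}$ has compact support and $TS=W$ exists by hypothesis,
\[
1_{U}W=1_{U}(TS)=(1_{U}T)(1_{U}S)=0\cdot(1_{U}S)=0.
\]
Hence $W\mid_{U}=0$, which says that $x$ admits a neighborhood on which $W=TS$ vanishes, so $x\notin\mathrm{supp}(TS)$. Since $x$ was an arbitrary point outside $\mathrm{supp}(T)\cap\mathrm{supp}(S)$, we conclude $\mathrm{supp}(TS)\subseteq\mathrm{supp}(T)\cap\mathrm{supp}(S)$.

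There is no real obstacle: the only subtlety is justifying the step $(1_{U}T)(1_{U}S)=1_{U}(TS)$, which is not a pointwise multiplication of functions but a statement about distributions. This is precisely the associativity/commutativity fact used (and cited) in the proof of Lemma \ref{lemma4A}, applicable here because $1_{U}$ is a test function (an open compact set has a locally constant compactly supported indicator). Once this identity is in hand, the rest is bookkeeping.
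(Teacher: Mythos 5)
Your proof is correct and follows essentially the same route as the paper: localize at a point outside one of the supports via the indicator $1_{U}$, use the identity $1_{U}(TS)=(1_{U}T)(1_{U}S)$ from the proof of Lemma \ref{lemma4A} (with the Taibleson associativity/commutativity result), and conclude $W\mid_{U}=0$. The only cosmetic difference is that the paper expresses the final vanishing on the Fourier side, as $\mathcal{F}\left[1_{U}T\right]\ast\mathcal{F}\left[1_{U}S\right]=0=\mathcal{F}\left[1_{U}W\right]$ (and starts from $x\notin\mathrm{supp}(S)$ rather than $\mathrm{supp}(T)$), which changes nothing of substance.
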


\begin{proof}
Since $x\notin$supp$\left(  S\right)  $, there exists a compact open set $U$
containing $x$ such $\left(  S,\varphi\right)  =0$ for any $\varphi
\in\boldsymbol{S}(U)$, hence $1_{U}S=0$, and $\mathcal{F}\left[
1_{U}T\right]  \ast\mathcal{F}\left[  1_{U}S\right]  =0=\mathcal{F}\left[
1_{U}W\right]  $, i.e. $W\mid_{U}=0$, which means $x\notin$supp$\left(
W\right)  $.
\end{proof}

\begin{remark}
Lemma \ref{lemma4A} and Corollary \ref{cor1} are valid in arbitrary dimension.
These results are well-known in the Archimedean setting, see e.g.
\cite[Theorem IX.43]{Reed-Simon}, however, such results do not appear in the
standard books of $p$-adic analysis \cite{A-K-S}, \cite{Koch},
\cite{Taibleson}, \cite{V-V-Z}.
\end{remark}

\begin{remark}
\label{nota_cuenta}(i) Let $\Omega$\ denote the characteristic function of the
interval $\left[  0,1\right]  $. Then $\Omega\left(  p^{-j}\left\Vert
x\right\Vert _{p}\right)  $ is the characteristic function of the ball
$B_{j}^{\left(  n\right)  }\left(  0\right)  $. We recall definition of the
product of two distributions. Set $\delta_{j}\left(  x\right)  :=p^{nj}%
\Omega\left(  p^{j}\left\Vert x\right\Vert _{p}\right)  $ for $j\in\mathbb{N}%
$. Given $T,S\in\mathbf{S}^{\prime}\left(  \mathbb{Q}_{p}^{n}\right)  $, their
product $TS$ is defined by%
\[
\left(  TS,\varphi\right)  =\lim_{j\rightarrow+\infty}\left(  S,\left(
T\ast\delta_{j}\right)  \varphi\right)
\]
if the limit exists for all $\varphi\in\mathbf{S}\left(  \mathbb{Q}_{p}%
^{n}\right)  $.

(ii) We assert that
\[
\left(  \left\vert \left[  k,k\right]  -m^{2}\right\vert _{p}^{\alpha
}\mathcal{F}T,\varphi\right)  =\left(  \mathcal{F}T,\left\vert \left[
k,k\right]  -m^{2}\right\vert _{p}^{\alpha}\varphi\right)
\]
for any $T\in\mathcal{E}_{Q,m}\left(  \mathbb{Q}_{p}^{4}\right)  $ and any
$\varphi\in\mathbf{S}\left(  \mathbb{Q}_{p}^{n}\right)  $. Indeed, by using
the fact that $V_{m^{2}}$ has $d^{4}y$-measure zero,
\begin{align*}
\left\vert \left[  k,k\right]  -m^{2}\right\vert _{p}^{\alpha}\ast\delta
_{j}\left(  k\right)   &  =\left(  \left\vert \left[  y,y\right]
-m^{2}\right\vert _{p}^{\alpha},\delta_{j}\left(  k-y\right)  \right) \\
&  =p^{4j}%
{\displaystyle\int\limits_{k+\left(  p^{j}\mathbb{Z}_{p}\right)  ^{4}}}
\left\vert \left[  y,y\right]  -m^{2}\right\vert _{p}^{\alpha}d^{4}y\\
&  =p^{4j}%
{\displaystyle\int\limits_{k+\left(  p^{j}\mathbb{Z}_{p}\right)
^{4}\smallsetminus V_{m^{2}}}}
\left\vert \left[  y,y\right]  -m^{2}\right\vert _{p}^{\alpha}d^{4}%
y=\left\vert \left[  k,k\right]  -m^{2}\right\vert _{p}^{\alpha}%
\end{align*}
for $j$ big enough depending on $k$. Then%
\begin{align*}
\left(  \left\vert \left[  k,k\right]  -m^{2}\right\vert _{p}^{\alpha
}\mathcal{F}T,\varphi\right)   &  =\lim_{j\rightarrow+\infty}\left(
\mathcal{F}T\left(  k\right)  ,\left[  \left\vert \left[  k,k\right]
-m^{2}\right\vert _{p}^{\alpha}\ast\delta_{j}\left(  k\right)  \right]
\varphi\left(  k\right)  \right) \\
&  =\left(  \mathcal{F}T\left(  k\right)  ,\left\vert \left[  k,k\right]
-m^{2}\right\vert _{p}^{\alpha}\varphi\left(  k\right)  \right)  .
\end{align*}

\end{remark}

\begin{lemma}
\label{lemma5}A distribution $T\in\mathcal{E}_{Q,m}\left(  \mathbb{Q}_{p}%
^{4}\right)  $ is a $\varphi$ weak solution of $\square_{\alpha,m}T=0$ if and
only if supp$\mathcal{F}T\subseteq V_{m^{2}}$.
\end{lemma}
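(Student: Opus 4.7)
By the Fourier inversion theorem on $(\mathbb{Q}_{p}^{4},Q)$, the transform $\mathcal{F}$ is an isomorphism of $\boldsymbol{S}'(\mathbb{Q}_{p}^{4})$ onto itself, so the equation $\square_{\alpha,m}T=0$ in $\boldsymbol{S}'(\mathbb{Q}_{p}^{4})$ is equivalent to the vanishing of the product $|Q(k)-m^{2}|_{p}^{\alpha}\mathcal{F}T$. The lemma thus reduces to showing that this product is zero if and only if $\mathrm{supp}(\mathcal{F}T)\subseteq V_{m^{2}}$.

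For the forward direction, I argue locally: given $k_{0}\notin V_{m^{2}}$, continuity of the polynomial $Q(k)-m^{2}$ together with the local constancy of $|\cdot|_{p}$ on $\mathbb{Q}_{p}^{\times}$ implies that on a sufficiently small compact open ball $U$ containing $k_{0}$, the function $|Q(k)-m^{2}|_{p}^{\alpha}$ is constant and equal to some $c>0$. For any $\varphi\in\boldsymbol{S}(U)$, Remark \ref{nota_cuenta}(ii) yields
\[
0=(|Q-m^{2}|_{p}^{\alpha}\mathcal{F}T,\varphi)=(\mathcal{F}T,|Q-m^{2}|_{p}^{\alpha}\varphi)=c\,(\mathcal{F}T,\varphi),
\]
so $\mathcal{F}T$ vanishes on $U$, proving $k_{0}\notin\mathrm{supp}(\mathcal{F}T)$.

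For the converse, suppose $\mathrm{supp}(\mathcal{F}T)\subseteq V_{m^{2}}$ and fix $\varphi\in\boldsymbol{S}(\mathbb{Q}_{p}^{4})$. By Remark \ref{nota_cuenta}(i),
\[
(|Q-m^{2}|_{p}^{\alpha}\mathcal{F}T,\varphi)=\lim_{j\to\infty}(\mathcal{F}T,f_{j}\varphi),\qquad f_{j}:=|Q-m^{2}|_{p}^{\alpha}*\delta_{j},
\]
and each $f_{j}$ is locally constant (being the average of $|Q-m^{2}|_{p}^{\alpha}$ over balls of radius $p^{-j}$). For every integer $N$ the set $F_{N}:=\{k:|Q(k)-m^{2}|_{p}\leq p^{-N}\}$ is open and closed, and I split
\[
f_{j}\varphi=\mathbf{1}_{F_{N}}f_{j}\varphi+\mathbf{1}_{\mathbb{Q}_{p}^{4}\setminus F_{N}}f_{j}\varphi.
\]
The second summand is a test function whose support is disjoint from $V_{m^{2}}$, hence from $\mathrm{supp}(\mathcal{F}T)$, so it pairs with $\mathcal{F}T$ to give $0$. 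Using the ultrametric identity $Q(k+p^{j}z)-m^{2}=p^{j}(2[k,z]+p^{j}Q(z))$ and the boundedness of $\|k\|_{p}$ on $\mathrm{supp}(\varphi)$, one checks that for $j$ large (compared to $N$), $f_{j}(k)\leq p^{-\alpha N}$ uniformly on $F_{N}\cap\mathrm{supp}(\varphi)$.

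\textbf{Main obstacle.} The delicate step is converting this uniform smallness of $\mathbf{1}_{F_{N}}f_{j}\varphi$ into vanishing of the distributional pairing $(\mathcal{F}T,\mathbf{1}_{F_{N}}f_{j}\varphi)$: a distribution is not a priori continuous in sup-norm, and the level of local constancy of $\mathbf{1}_{F_{N}}f_{j}\varphi$ refines as $j\to\infty$, so these functions do not lie in a fixed finite-dimensional subspace of $\boldsymbol{S}(\mathbb{Q}_{p}^{4})$. I expect to resolve this by invoking the fact that any distribution supported on the closed $p$-adic submanifold $V_{m^{2}}$ (codimension $1$ by Lemma \ref{lemma2}) factors through the restriction $\boldsymbol{S}(\mathbb{Q}_{p}^{4})\to\boldsymbol{S}(V_{m^{2}})$: the restrictions $(\mathbf{1}_{F_{N}}f_{j}\varphi)|_{V_{m^{2}}}$ lie in a fixed inductive step of $\boldsymbol{S}(V_{m^{2}})$ (with support in $V_{m^{2}}\cap\mathrm{supp}(\varphi)$) and tend to zero in that finite-dimensional space as $j\to\infty$, yielding $\lim_{j}(\mathcal{F}T,f_{j}\varphi)=0$ and completing the proof.
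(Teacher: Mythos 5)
Your first half (if $\square_{\alpha,m}T=0$ then $\mathrm{supp}\,\mathcal{F}T\subseteq V_{m^{2}}$) is correct and is essentially the paper's argument: the paper runs it by contradiction, pairing with a characteristic function $\mathbf{1}_{U}$ of a small ball $U\subset\mathbb{Q}_{p}^{4}\smallsetminus V_{m^{2}}$ on which $\left\vert Q(k)-m^{2}\right\vert _{p}^{\alpha}$ is constant and invoking Remark \ref{nota_cuenta}(ii); you do the same computation directly. No issue there.

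The other direction is where your proposal has a genuine gap, and it is also where you diverge from the paper: the paper uses no estimate at all, but rather the support calculus for existing products (Lemma \ref{lemma4A} and Corollary \ref{cor1}), concluding that $\mathrm{supp}\left(\left\vert Q-m^{2}\right\vert _{p}^{\alpha}\mathcal{F}T\right)$ is contained in $\mathrm{supp}\,\mathcal{F}T\cap\mathrm{supp}\left(\left\vert Q-m^{2}\right\vert _{p}^{\alpha}\right)$ together with the a.e. identification $\left\vert Q-m^{2}\right\vert _{p}^{\alpha}=\left\vert Q-m^{2}\right\vert _{p}^{\alpha}\mathbf{1}_{\mathbb{Q}_{p}^{4}\smallsetminus V_{m^{2}}}$. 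Your mollifier computation up to the uniform bound $f_{j}\leq p^{-\alpha N}$ on $F_{N}\cap\mathrm{supp}\,\varphi$ is fine, and you correctly identify the crux; but the proposed resolution does not work. The factorization itself is legitimate in the non-Archimedean setting: a test function vanishing pointwise on the closed set $V_{m^{2}}$ is locally constant, hence vanishes on a neighborhood of $V_{m^{2}}$, so a distribution supported on $V_{m^{2}}$ sees only the restriction $(\mathbf{1}_{F_{N}}f_{j}\varphi)\mid_{V_{m^{2}}}$. What fails is the claim that these restrictions lie in a fixed finite-dimensional step of $\boldsymbol{S}(V_{m^{2}})$: $f_{j}=\left\vert Q-m^{2}\right\vert _{p}^{\alpha}\ast\delta_{j}$ is locally constant only at scale $p^{-j}$, and for $k\in V_{m^{2}}$ one has $f_{j}(k)=p^{-\alpha j}\int_{\mathbb{Z}_{p}^{4}}\left\vert 2\left[k,w\right]+p^{j}Q(w)\right\vert _{p}^{\alpha}d^{4}w$, which is not constant on balls of any radius fixed independently of $j$; the local-constancy index of the restrictions degrades as $j\rightarrow\infty$. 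Consequently uniform smallness does not yield $(\mathcal{F}T,\mathbf{1}_{F_{N}}f_{j}\varphi)\rightarrow0$ for a general $\mathcal{F}T$ supported on $V_{m^{2}}$: a functional on $\boldsymbol{S}(V_{m^{2}})$ is continuous only along sequences with fixed compact support and fixed local-constancy radius, and need not be bounded for the sup norm (it need not be a measure). So the last step of your argument does not close as written; you would need either an extension/approximation producing functions of a fixed local-constancy radius, or the localization route through Lemma \ref{lemma4A} and Corollary \ref{cor1} that the paper takes.
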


\begin{proof}
Suppose that supp$\mathcal{F}T\subseteq V_{m^{2}}$, then by Corollary
\ref{cor1}, we have%
\[
\text{supp}\left(  \left\vert \left[  k,k\right]  -m^{2}\right\vert
_{p}^{\alpha}\mathcal{F}T\right)  \subseteq\text{supp}\left(  \mathcal{F}%
T\right)  \cap\text{supp}\left(  \left\vert \left[  k,k\right]  -m^{2}%
\right\vert _{p}^{\alpha}\right)  =\emptyset
\]
because $\left\vert \left[  k,k\right]  -m^{2}\right\vert _{p}^{\alpha
}=\left\vert \left[  k,k\right]  -m^{2}\right\vert _{p}^{\alpha}%
{\LARGE 1}_{\mathbb{Q}_{p}^{4}\smallsetminus V_{m^{2}}}$ in $\boldsymbol{S}%
^{\prime}(\mathbb{Q}_{p}^{4})$ ($V_{m^{2}}$ has $d^{4}k$-measure zero) and
supp$\left(  \left\vert \left[  k,k\right]  -m^{2}\right\vert _{p}^{\alpha
}{\LARGE 1}_{\mathbb{Q}_{p}^{4}\smallsetminus V_{m^{2}}}\right)
\subseteq\mathbb{Q}_{p}^{4}\smallsetminus V_{m^{2}}$, therefore $\left\vert
\left[  k,k\right]  -m^{2}\right\vert _{p}^{\alpha}\mathcal{F}T=0$.

Suppose now that $\left\vert \left[  k,k\right]  -m^{2}\right\vert
_{p}^{\alpha}\mathcal{F}T=0$. By contradiction, assume that supp$\mathcal{F}%
T\nsubseteq V_{m^{2}}$. Then , there exists $k_{0}\in\mathbb{Q}_{p}%
^{4}\smallsetminus V_{m^{2}}$ and an open compact subset $U\subset
\mathbb{Q}_{p}^{4}\smallsetminus V_{m^{2}}$ containing $k_{0}$ such that
$\left(  \mathcal{F}T,{\LARGE 1}_{U}\right)  \neq0$. By using Remark
\ref{nota_cuenta}-(ii) and by shrinking $U$ if necessary,%
\begin{align*}
\left(  \left\vert \left[  k,k\right]  -m^{2}\right\vert _{p}^{\alpha
}\mathcal{F}T,{\LARGE 1}_{U}\right)   &  =\left(  \mathcal{F}T,\left\vert
\left[  k,k\right]  -m^{2}\right\vert _{p}^{\alpha}{\LARGE 1}_{U}\right) \\
&  =\left\vert \left[  k_{0},k_{0}\right]  -m^{2}\right\vert _{p}^{\alpha
}\left(  \mathcal{F}T,{\LARGE 1}_{U}\right)  \neq0,
\end{align*}
contradicting $\left\vert \left[  k,k\right]  -m^{2}\right\vert _{p}^{\alpha
}\mathcal{F}T=0$.
\end{proof}

\begin{remark}
\label{nota3} Let $\varphi\in\boldsymbol{S}\left(  \mathbb{Q}_{p}^{4}\right)
$ and let $\Lambda\in\mathcal{L}_{+}^{\uparrow}$, a Lorentz transformation. We
have
\[
\mathcal{F}\left[  \varphi\left(  \Lambda x\right)  \right]  \left(  k\right)
=\mathcal{F}\left[  \varphi\right]  \left(  \Lambda k\right)
\]
and
\[
\Lambda\mathcal{F}\left[  T\right]  =\mathcal{F}\left[  \Lambda T\right]  .
\]
Hence the Fourier transform preserves Lorentz invariance, or more generally,
the Fourier transform preserves invariance under $O\left(  Q\right)  $.
\end{remark}

\begin{proposition}
\label{Prop1} The distributions
\[
\mathcal{F}\left[  T\right]  \left(  k\right)  =a\delta_{+}\left(  Q\left(
k\right)  -m^{2}\right)  +b\delta_{-}\left(  Q\left(  k\right)  -m^{2}\right)
\text{, }a,b\in\mathbb{C}\text{,}%
\]
are weak solutions of $\square_{\alpha,m}T=0$\ invariant under $\mathcal{L}%
_{+}^{\uparrow}$.
\end{proposition}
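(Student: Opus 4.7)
My plan is to invoke Lemma \ref{lemma5} as the main tool: once we know $T \in \mathcal{E}_{Q,m}$, the weak-solution property reduces to the support condition $\operatorname{supp}\mathcal{F}T \subseteq V_{m^{2}}$, and this is immediate since $\delta_{\pm}(Q(k)-m^{2})$ is by construction supported on $V_{m^{2}}^{\pm} \subseteq V_{m^{2}}$. So the linear combination $\mathcal{F}T = a\delta_{+}(Q(k)-m^{2}) + b\delta_{-}(Q(k)-m^{2})$ has support contained in $V_{m^{2}}^{+} \cup V_{m^{2}}^{-} \subseteq V_{m^{2}}$.

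The delicate point, and what I expect to be the main obstacle, is verifying that $T \in \mathcal{E}_{Q,m}$, i.e.\ that the product $|[k,k]-m^{2}|_{p}^{\alpha}\,\mathcal{F}T$ exists in $\mathbf{S}'(\mathbb{Q}_{p}^{4})$. I would handle this using the regularization $\delta_{j}$ from Remark \ref{nota_cuenta}. The computation there already shows that for any fixed $k \notin V_{m^{2}}$, the convolution $|[k,k]-m^{2}|_{p}^{\alpha} \ast \delta_{j}(k)$ stabilizes to $|[k,k]-m^{2}|_{p}^{\alpha}$ for $j$ large enough depending on $k$. Plugging this regularization against $\mathcal{F}T$ and a test function $\varphi \in \mathbf{S}(\mathbb{Q}_{p}^{4})$, the pairing becomes
\[
\bigl(\mathcal{F}T,\,|[k,k]-m^{2}|_{p}^{\alpha}\,\varphi(k)\bigr)
= a\!\!\int_{V_{m^{2}}^{+}}\!\! |[k,k]-m^{2}|_{p}^{\alpha}\varphi(k)\,d\lambda_{m^{2}}(k)
+ b\!\!\int_{V_{m^{2}}^{-}}\!\! |[k,k]-m^{2}|_{p}^{\alpha}\varphi(k)\,d\lambda_{m^{2}}(k),
\]
and both integrals vanish because $|[k,k]-m^{2}|_{p}^{\alpha} \equiv 0$ pointwise on $V_{m^{2}}$. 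This shows simultaneously that the product exists (hence $T \in \mathcal{E}_{Q,m}$) and equals the zero distribution, so $\square_{\alpha,m}T = 0$.

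Finally, for invariance under $\mathcal{L}_{+}^{\uparrow}$, I would combine Lemma \ref{lemma4} with Remark \ref{nota3}. By Lemma \ref{lemma4}, each $\delta_{\pm}(Q(k)-m^{2})$ is invariant under $\mathcal{L}_{+}^{\uparrow}$, so the linear combination $\mathcal{F}T$ is also $\mathcal{L}_{+}^{\uparrow}$-invariant. Remark \ref{nota3} states that the Fourier transform intertwines the action of $\mathcal{L}_{+}^{\uparrow}$ (in fact of all of $O(Q)$), so $\Lambda T = \mathcal{F}^{-1}[\Lambda \mathcal{F}T] = \mathcal{F}^{-1}[\mathcal{F}T] = T$ for every $\Lambda \in \mathcal{L}_{+}^{\uparrow}$, completing the proof. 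The only genuinely nontrivial ingredient is the product-existence verification above; everything else is a direct assembly of results already established in the excerpt.
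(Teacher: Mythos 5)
Your proposal is correct and follows essentially the same route as the paper: the paper's proof is precisely the combination of Remark \ref{nota3} and Lemma \ref{lemma4} for the $\mathcal{L}_{+}^{\uparrow}$-invariance and Lemma \ref{lemma5} for the weak-solution property via the support condition, and your explicit verification that $T\in\mathcal{E}_{Q,m}$ is a point the paper leaves implicit. One small repair in that verification: since $\mathcal{F}T$ is supported on $V_{m^{2}}$, the stabilization of $\left\vert \left[ k,k\right] -m^{2}\right\vert _{p}^{\alpha}\ast\delta_{j}$ at points $k\notin V_{m^{2}}$ is not the relevant fact; what you need (and what is true) is that for $k\in V_{m^{2}}$ and $y\in k+\left( p^{j}\mathbb{Z}_{p}\right) ^{4}$ one has $\left\vert \left[ y,y\right] -m^{2}\right\vert _{p}^{\alpha}\leq\max\left( \left\Vert k\right\Vert _{p},1\right) ^{\alpha}p^{-j\alpha}$, so the regularized symbol tends to zero on the support of $\mathcal{F}T$, and dominated convergence then gives both the existence of the product and its vanishing.
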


\begin{proof}
By Remark \ref{nota3}, it is sufficient to show that $\delta_{\pm}\left(
Q\left(  k\right)  -m^{2}\right)  $ are invariant solutions of $\left\vert
\left[  k,k\right]  -m^{2}\right\vert _{p}^{\alpha}\mathcal{F}\left[
T\right]  =0$, which follows from Lemmas \ref{lemma4}-\ref{lemma5}.
\end{proof}

At this point we should mention that a similar results to Lemmas
\ref{lemma4}-\ref{lemma5} and Proposition \ref{Prop1} are valid for the
Archimedean Klein-Gordon equation, see e.g. \cite{DJager} or \cite[Chapter
IV]{DJager2}.

\section{The Cauchy Problem for the non-Archimedean Klein-Gordon Equation}

In this section we study the Cauchy problem for the $p$-adic Klein-Gordon equations.

\subsection{Twisted Vladimirov pseudodifferential operators}

Let $\mathbb{C}_{1}^{\times}$ denote the multiplicative group of complex
numbers having modulus one as before. Let $\pi_{1}:\mathbb{Z}_{p}^{\times
}\rightarrow\mathbb{C}_{1}^{\times}$ be a \textit{non-trivial multiplicative
character} of $\mathbb{Z}_{p}^{\times}$ with positive \textit{conductor} $k$,
i.e. $k$ is the smallest positive integer such that $\pi_{1}\mid
_{1+p^{k}\mathbb{Z}_{p}}=1$. Some authors call a such character a
\textit{unitary character of} $\mathbb{Z}_{p}^{\times}$. We extend $\pi_{1}$
to $\mathbb{Q}_{p}^{\times}$ by putting $\pi_{1}\left(  x\right)  :=\pi
_{1}\left(  ac\left(  x\right)  \right)  $. A \textit{quasicharacter} of
$\mathbb{Q}_{p}^{\times}$ (some \ authors use \textit{multiplicative
character}) is a continuous homomorphism from $\mathbb{Q}_{p}^{\times}$ into
$\mathbb{C}^{\times}$. Every quasicharacter has the form $\pi_{s}\left(
x\right)  =\pi_{1}\left(  x\right)  \left\vert x\right\vert _{p}^{s-1}$ for
some complex number $s$.

The distribution associated with $\pi_{s}\left(  x\right)  $ has a meromorphic
continuation to the whole complex plane given by%
\begin{equation}
\left(  \pi_{s}\left(  x\right)  ,\varphi\left(  x\right)  \right)  =%
{\displaystyle\int\limits_{\mathbb{Z}_{p}}}
\pi_{1}\left(  x\right)  \left\vert x\right\vert _{p}^{s-1}\left\{
\varphi\left(  x\right)  -\varphi\left(  0\right)  \right\}  dx+%
{\displaystyle\int\limits_{\mathbb{Q}_{p}\smallsetminus\mathbb{Z}_{p}}}
\pi_{1}\left(  x\right)  \left\vert x\right\vert _{p}^{s-1}\varphi\left(
x\right)  dx, \label{form3}%
\end{equation}
see e.g. \cite[p. 117]{V-V-Z}.

On the other hand,%
\begin{equation}
\mathfrak{F}\left[  \pi_{s}\right]  \left(  \xi\right)  =\Gamma_{p}\left(
s,\pi_{1}\right)  \pi_{1}^{-1}\left(  \xi\right)  \left\vert \xi\right\vert
_{p}^{-s}\text{, for any }s\in\mathbb{C}\text{,} \label{form4}%
\end{equation}
where
\begin{align*}
\Gamma_{p}\left(  s,\pi_{1}\right)   &  =p^{sk}a_{p,k}\left(  \pi_{1}\right)
\text{, }\\
a_{p,k}\left(  \pi_{1}\right)   &  =%
{\displaystyle\int\limits_{\mathbb{Z}_{p}^{\times}}}
\pi_{1}\left(  t\right)  \chi_{p}\left(  p^{-k}t\right)  dt\text{ and
}\left\vert a_{p,k}\left(  \pi_{1}\right)  \right\vert =p^{-\frac{k}{2}%
}\text{,}%
\end{align*}
see e.g. \cite[p. 124]{V-V-Z}.

Another useful formula is the following:%

\begin{equation}
\left(  \pi_{s}\left(  x\right)  ,\varphi\left(  x\right)  \right)  =%
{\displaystyle\int\limits_{\mathbb{Q}_{p}}}
\frac{\pi_{1}\left(  x\right)  \left\{  \varphi\left(  x\right)
-\varphi\left(  0\right)  \right\}  }{\left\vert x\right\vert _{p}^{s+1}%
}dx\text{, for }\operatorname{Re}(s)>0\text{, and }\varphi\in S\left(
\mathbb{Q}_{p}\right)  \text{.} \label{form6}%
\end{equation}
The formula follows from (\ref{form3}) by using that
\begin{equation}%
{\displaystyle\int\limits_{\mathbb{Q}_{p}\smallsetminus\mathbb{Z}_{p}}}
\frac{\pi_{1}\left(  x\right)  }{\left\vert x\right\vert _{p}^{s+1}}dx=\left(
%
{\displaystyle\sum\limits_{j=1}^{\infty}}
p^{-js}\right)
{\displaystyle\int\limits_{\mathbb{Z}_{p}^{\times}}}
\pi_{1}\left(  y\right)  dy=0. \label{form6A}%
\end{equation}
Given $\alpha>0$, we define \textit{the twisted Vladimirov operator} by%
\[
\left(  \widetilde{D}_{x}^{\alpha}\varphi\right)  \left(  x\right)
=\mathfrak{F}_{k\rightarrow x}^{-1}\left(  \pi_{1}^{-1}\left(  k\right)
\left\vert k\right\vert _{p}^{\alpha}\mathfrak{F}_{x\rightarrow k}\left(
\varphi\right)  \right)  \text{ for }\varphi\in S\left(  \mathbb{Q}%
_{p}\right)  \text{.}%
\]
Notice that
\[%
\begin{array}
[c]{ccc}%
S\left(  \mathbb{Q}_{p}\right)  & \rightarrow & C\left(  \mathbb{Q}%
_{p},\mathbb{C}\right)  \cap L^{2}\\
&  & \\
\varphi & \rightarrow & \left(  \widetilde{D}_{x}^{\alpha}\varphi\right)
\end{array}
\]
is a well-defined linear operator.

\begin{lemma}
\label{lemma6}For $\alpha>0$ and $\varphi\in S\left(  \mathbb{Q}_{p}\right)
$, the following formula holds:%
\begin{equation}
\left(  \widetilde{D}_{x}^{\alpha}\varphi\right)  \left(  x\right)  =\frac
{1}{\Gamma_{p}\left(  -\alpha,\pi_{1}\right)  }%
{\displaystyle\int\limits_{\mathbb{Q}_{p}}}
\frac{\pi_{1}\left(  y\right)  \left\{  \varphi\left(  x-y\right)
-\varphi\left(  x\right)  \right\}  }{\left\vert y\right\vert _{p}^{\alpha+1}%
}dy. \label{form7}%
\end{equation}

\end{lemma}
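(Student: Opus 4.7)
The strategy is to identify $\widetilde{D}_x^\alpha$ as convolution with the kernel $\Gamma_p(-\alpha,\pi_1)^{-1}\pi_{-\alpha}$, and then to convert this distributional convolution into the singular integral (\ref{form7}) via the regularization formula (\ref{form3}) and the vanishing identity (\ref{form6A}).

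First I would apply (\ref{form4}) at $s=-\alpha$, which reads $\mathfrak{F}[\pi_{-\alpha}](k)=\Gamma_p(-\alpha,\pi_1)\,\pi_1^{-1}(k)\,|k|_p^\alpha$, to rewrite the Fourier multiplier occurring in the definition as $\pi_1^{-1}(k)|k|_p^\alpha=\Gamma_p(-\alpha,\pi_1)^{-1}\mathfrak{F}[\pi_{-\alpha}](k)$. Since $\varphi\in \mathbf{S}(\mathbb{Q}_p)$ implies $\mathfrak{F}\varphi$ is Bruhat--Schwartz, and in particular compactly supported, the product $\pi_1^{-1}(k)|k|_p^\alpha\,\mathfrak{F}\varphi(k)$ is a bounded function of compact support, so Fourier inversion applies pointwise. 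Combining with the convolution theorem then gives $\widetilde{D}_x^\alpha\varphi(x)=\Gamma_p(-\alpha,\pi_1)^{-1}(\pi_{-\alpha}\ast\varphi)(x)$, where $\pi_{-\alpha}$ is interpreted as the tempered distribution obtained by the analytic continuation (\ref{form3}).

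Next I would unfold the distributional convolution by pairing $\pi_{-\alpha}$ against the test function $y\mapsto\varphi(x-y)$ via (\ref{form3}) at $s=-\alpha$. This splits $(\pi_{-\alpha}\ast\varphi)(x)$ as the sum of an integral over $\mathbb{Z}_p$ of $\pi_1(y)|y|_p^{-\alpha-1}\{\varphi(x-y)-\varphi(x)\}$ and an integral over $\mathbb{Q}_p\setminus\mathbb{Z}_p$ of $\pi_1(y)|y|_p^{-\alpha-1}\varphi(x-y)$. I would then invoke (\ref{form6A}) with $s=\alpha$, which states $\int_{\mathbb{Q}_p\setminus\mathbb{Z}_p}\pi_1(y)|y|_p^{-\alpha-1}\,dy=0$, to insert a harmless $-\varphi(x)$ inside the second integral. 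The two pieces then fuse into a single integral over $\mathbb{Q}_p$ of $\pi_1(y)|y|_p^{-\alpha-1}\{\varphi(x-y)-\varphi(x)\}$, and dividing by $\Gamma_p(-\alpha,\pi_1)$ yields (\ref{form7}).

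The main obstacle is really bookkeeping: justifying the convolution theorem in the distributional setting and tracking the analytic-continuation conventions used to define $\pi_{-\alpha}$, since for $\alpha>0$ the kernel $|y|_p^{-\alpha-1}$ is not locally integrable at the origin, so the ``single integral'' in (\ref{form7}) is intrinsically a regularized object. Absolute convergence of the right-hand side is nevertheless immediate from local constancy of $\varphi$: there exists $l\in\mathbb{Z}$ with $\varphi(x-y)=\varphi(x)$ whenever $\|y\|_p\leq p^{l}$, so the apparent singularity at $y=0$ disappears and what remains is a compactly supported integrand paired with $|y|_p^{-\alpha-1}$ on the complement, which is manifestly integrable.
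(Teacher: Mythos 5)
Your proposal is correct and follows essentially the same route as the paper's proof: identify the multiplier via (\ref{form4}) at $s=-\alpha$ so that $\widetilde{D}_{x}^{\alpha}\varphi=\Gamma_{p}\left(  -\alpha,\pi_{1}\right)^{-1}\pi_{-\alpha}\ast\varphi$, unfold the pairing with (\ref{form3}), and use (\ref{form6A}) to insert $-\varphi(x)$ in the integral over $\mathbb{Q}_{p}\smallsetminus\mathbb{Z}_{p}$ and fuse the two pieces. Your added remarks on justifying Fourier inversion and the absolute convergence of the right-hand side via local constancy are consistent with, and slightly more explicit than, the paper's argument.
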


\begin{proof}
By using (\ref{form4}), we have%
\begin{multline*}
\left(  \widetilde{D}_{x}^{\alpha}\varphi\right)  \left(  x\right)
=\mathfrak{F}_{\xi\rightarrow x}^{-1}\left(  \pi_{1}^{-1}\left(  k\right)
\left\vert k\right\vert _{p}^{\alpha}\right)  \left(  x\right)  \ast
\varphi\left(  x\right)  =\frac{1}{\Gamma_{p}\left(  -\alpha,\pi_{1}\right)
}\pi_{-\alpha}\left(  x\right)  \ast\varphi\left(  x\right) \\
=\frac{1}{\Gamma_{p}\left(  -\alpha,\pi_{1}\right)  }\left(  \pi_{-\alpha
}\left(  y\right)  ,\varphi\left(  x-y\right)  \right) \\
=\frac{1}{\Gamma_{p}\left(  -\alpha,\pi_{1}\right)  }%
{\displaystyle\int\limits_{\mathbb{Z}_{p}}}
\frac{\pi_{1}\left(  y\right)  \left\{  \varphi\left(  x-y\right)
-\varphi\left(  x\right)  \right\}  }{\left\vert y\right\vert _{p}^{\alpha+1}%
}dy+\frac{1}{\Gamma_{p}\left(  -\alpha,\pi_{1}\right)  }%
{\displaystyle\int\limits_{\mathbb{Q}_{p}\smallsetminus\mathbb{Z}_{p}}}
\frac{\pi_{1}\left(  y\right)  \varphi\left(  x-y\right)  }{\left\vert
y\right\vert _{p}^{\alpha+1}}dx\\
=\frac{1}{\Gamma_{p}\left(  -\alpha,\pi_{1}\right)  }%
{\displaystyle\int\limits_{\mathbb{Z}_{p}}}
\frac{\pi_{1}\left(  y\right)  \left\{  \varphi\left(  x-y\right)
-\varphi\left(  x\right)  \right\}  }{\left\vert y\right\vert _{p}^{\alpha+1}%
}dy\\
+\frac{1}{\Gamma_{p}\left(  -\alpha,\pi_{1}\right)  }%
{\displaystyle\int\limits_{\mathbb{Q}_{p}\smallsetminus\mathbb{Z}_{p}}}
\frac{\pi_{1}\left(  y\right)  \left\{  \varphi\left(  x-y\right)
-\varphi\left(  x\right)  \right\}  }{\left\vert y\right\vert _{p}^{\alpha+1}%
}dy,
\end{multline*}
where we used (\ref{form6A}).
\end{proof}

Note that the right-hand side of (\ref{form7}) makes sense for a wider class
of functions. For instance, for $\widetilde{\mathcal{E}}_{\alpha}\left(
\mathbb{Q}_{p}\right)  $, the $\mathbb{C}$-vector space \ of locally constant
functions $u\left(  x\right)  $ satisfying
\[%
{\displaystyle\int\limits_{\mathbb{Q}_{p}\smallsetminus\mathbb{Z}_{p}}}
\frac{\left\vert u\left(  x\right)  \right\vert }{\left\vert x\right\vert
_{p}^{\alpha+1}}dx<\infty\text{.}%
\]

Another useful formula is the following:

\begin{lemma}
\label{lemma7}For $\alpha>0$, we have%
\[
\pi_{1}^{-1}\left(  x\right)  \left\vert x\right\vert _{p}^{\alpha}=\frac
{1}{\Gamma_{p}\left(  -\alpha,\pi_{1}\right)  }%
{\displaystyle\int\limits_{\mathbb{Q}_{p}}}
\frac{\pi_{1}\left(  y\right)  \left\{  \chi_{p}\left(  yx\right)  -1\right\}
}{\left\vert y\right\vert _{p}^{\alpha+1}}dy\text{ in\ }\boldsymbol{S}%
^{\prime}\left(  \mathbb{Q}_{p}\right)  .
\]

\end{lemma}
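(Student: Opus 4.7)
The plan is to reduce the identity to the Fourier transform formula (\ref{form4}) for $\pi_{-\alpha}$, interpret both sides by pairing with an arbitrary test function $\varphi\in\boldsymbol{S}(\mathbb{Q}_{p})$, and then identify the resulting expressions by an application of Fubini's theorem.

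First, I would apply (\ref{form4}) with $s=-\alpha$ to obtain
\[
\pi_{1}^{-1}(x)\left\vert x\right\vert _{p}^{\alpha}=\Gamma_{p}(-\alpha,\pi_{1})^{-1}\mathfrak{F}\left[\pi_{-\alpha}\right](x)
\]
as distributions on $\mathbb{Q}_{p}$. Pairing with $\varphi$ and transferring $\mathfrak{F}$ onto $\varphi$, the claim reduces to verifying
\[
(\pi_{-\alpha},\mathfrak{F}[\varphi])=\int_{\mathbb{Q}_{p}}\varphi(x)\left(\int_{\mathbb{Q}_{p}}\frac{\pi_{1}(y)(\chi_{p}(yx)-1)}{\left\vert y\right\vert _{p}^{\alpha+1}}dy\right)dx.
\]

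Next, I would unfold the left-hand side using the meromorphic-continuation formula (\ref{form3}) at $s=-\alpha$ with $\psi=\mathfrak{F}[\varphi]\in\boldsymbol{S}(\mathbb{Q}_{p})$. In the piece over $\mathbb{Q}_{p}\smallsetminus\mathbb{Z}_{p}$, I would add and subtract $\psi(0)=\mathfrak{F}[\varphi](0)$; the resulting correction is $-\psi(0)\int_{\mathbb{Q}_{p}\smallsetminus\mathbb{Z}_{p}}\pi_{1}(y)\left\vert y\right\vert _{p}^{-\alpha-1}dy$, and this integral vanishes by the same calculation as in (\ref{form6A}). The key facts here are that $\alpha>0$ makes the integral absolutely convergent and that each spherical average $\int_{|y|_{p}=p^{j}}\pi_{1}(y)dy$ vanishes because $\pi_{1}$ is non-trivial on $\mathbb{Z}_{p}^{\times}$. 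The two pieces of (\ref{form3}) then merge into the single expression
\[
(\pi_{-\alpha},\mathfrak{F}[\varphi])=\int_{\mathbb{Q}_{p}}\frac{\pi_{1}(y)(\mathfrak{F}[\varphi](y)-\mathfrak{F}[\varphi](0))}{\left\vert y\right\vert _{p}^{\alpha+1}}dy.
\]

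Finally, I would substitute $\mathfrak{F}[\varphi](y)-\mathfrak{F}[\varphi](0)=\int_{\mathbb{Q}_{p}}\varphi(x)(\chi_{p}(yx)-1)dx$ and swap the order of integration. The one point deserving real care — and the main obstacle — is justifying Fubini: if $\varphi$ is supported in a ball $B_{N}(0)$, then $\chi_{p}(yx)-1$ vanishes for every $x\in\operatorname{supp}\varphi$ as soon as $\left\vert y\right\vert _{p}\leq p^{-N}$, so the joint integrand is supported in $\{|y|_{p}>p^{-N}\}\times\operatorname{supp}\varphi$. On this set it is dominated by a constant multiple of $\left\vert y\right\vert _{p}^{-\alpha-1}$, which is integrable for $\alpha>0$. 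Hence the double integral is absolutely convergent, Fubini applies, and the interchanged expression is precisely the right-hand side of the claimed identity tested against $\varphi$. Both the cancellation step and the Fubini bookkeeping hinge crucially on the combination of $\alpha>0$ and the non-triviality of $\pi_{1}$ on $\mathbb{Z}_{p}^{\times}$; beyond these two ingredients the argument is a direct unwinding of definitions.
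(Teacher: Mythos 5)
Your argument is correct and is essentially the paper's own route: the paper simply says the lemma "follows from (\ref{form4}) and (\ref{form6})" (as a variation of the trivial-character case in Kochubei), and your proposal reconstructs exactly that, using (\ref{form4}) at $s=-\alpha$, re-deriving (\ref{form6}) from (\ref{form3}) and the cancellation (\ref{form6A}), and then justifying the interchange of integrals. The Fubini bookkeeping via the support of $\varphi$ and the integrability of $\left\vert y\right\vert_{p}^{-\alpha-1}$ at infinity is the right way to make the cited one-line proof rigorous.
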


\begin{proof}
The formula follows from (\ref{form4}) and (\ref{form6}). The proof is a
simple variation of the proof given for the case in which $\pi_{1}$ is the
trivial character, see e.g. \cite[Proposition 2.3]{Koch}.
\end{proof}

From now on we put%
\[
\pi_{1}\left(  x\right)  :=i\left(  \frac{\overline{ac\left(  x\right)  }}%
{p}\right)  \text{ for }x\in\mathbb{Z}_{p}^{\times},
\]
where $\left(  \frac{\cdot}{p}\right)  $ denotes the Legendre symbol. Note
that $k=1$ and $\pi_{1}\left(  x\right)  \in\left\{  \pm i\right\}  $,
furthermore,%
\[
\left(  \frac{-1}{p}\right)  =\left\{
\begin{array}
[c]{ccc}%
1, & \text{if} & p-1\text{ is divisible by }4\\
&  & \\
-1 & \text{if} & p-3\text{ is divisible by }4.
\end{array}
\right.
\]

\subsection{The Cauchy Problem for the $p$-adic Klein-Gordon Equation}

In this section we take $x_{0}=t$ and $\left(  x_{0},\boldsymbol{x}\right)
=\left(  t,\boldsymbol{x}\right)  \in\mathbb{Q}_{p}\times\mathbb{Q}_{p}^{3}$.
Our goal is to study the following Cauchy problem:%

\begin{equation}
\left\{
\begin{array}
[c]{llll}%
\left(  \square_{\alpha,m}u\right)  \left(  t,\boldsymbol{x}\right)  =J\left(
t,\boldsymbol{x}\right)  , &  & J\left(  t,\boldsymbol{x}\right)  \in S\left(
\mathbb{Q}_{p}^{4}\right)  & \text{(A)}\\
&  &  & \\
u\left(  t,\boldsymbol{x}\right)  \mid_{t=0}=\psi_{0}\left(  \boldsymbol{x}%
\right)  , &  & \psi_{0}\in\boldsymbol{S}(\mathbb{Q}_{p}^{3})\text{ and
}\mathfrak{F}^{-1}\left[  \psi_{0}\right]  \in\boldsymbol{S}\left(
U_{Q,m}\right)  & \text{(B)}\\
&  &  & \\
\widetilde{D}_{x}^{\alpha}u\left(  t,\boldsymbol{x}\right)  \mid_{t=0}%
=\psi_{1}\left(  \boldsymbol{x}\right)  , &  & \psi_{1}\in\boldsymbol{S}%
(\mathbb{Q}_{p}^{3})\text{ and }\mathfrak{F}^{-1}\left[  \psi_{1}\right]
\in\boldsymbol{S}\left(  U_{Q,m}\right)  & \text{(C).}%
\end{array}
\right.  \label{Cuachy_problem}%
\end{equation}

\begin{theorem}
\label{TheoA}Assume that $p-3$ is divisible by $4$.Then Cauchy problem
(\ref{Cuachy_problem}) has a weak solution given by%
\begin{align}
u\left(  t,x\right)   &  =E_{\alpha}\left(  t,\boldsymbol{x}\right)  \ast
J\left(  t,\boldsymbol{x}\right)  +\label{Solution}\\
&
{\displaystyle\int\limits_{U_{Q,m}}}
\chi_{p}\left(  -t\sqrt{\boldsymbol{k}\cdot\boldsymbol{k}+m^{2}}%
+\boldsymbol{x}\cdot\boldsymbol{k}\right)  u_{+}\left(  \boldsymbol{k}\right)
\frac{d^{3}\boldsymbol{k}}{\left\vert \sqrt{\boldsymbol{k}\cdot\boldsymbol{k}%
+m^{2}}\right\vert _{p}}+\nonumber\\
&
{\displaystyle\int\limits_{U_{Q,m}}}
\chi_{p}\left(  t\sqrt{\boldsymbol{k}\cdot\boldsymbol{k}+m^{2}}+\boldsymbol{x}%
\cdot\boldsymbol{k}\right)  u_{-}\left(  \boldsymbol{k}\right)  \frac
{d^{3}\boldsymbol{k}}{\left\vert \sqrt{\boldsymbol{k}\cdot\boldsymbol{k}%
+m^{2}}\right\vert _{p}},\nonumber
\end{align}
where $E_{\alpha}\left(  t,\boldsymbol{x}\right)  $ is a distribution on
$\boldsymbol{S}\left(  \mathbb{Q}_{p}^{4}\right)  $ satisfying
\begin{equation}
\left\vert \left[  k,k\right]  -m^{2}\right\vert _{p}^{\alpha}\mathcal{F}_{%
\begin{array}
[c]{c}%
t\rightarrow k_{0}\\
\boldsymbol{x}\rightarrow\boldsymbol{k}%
\end{array}
}\left[  E_{\alpha}\left(  t,\boldsymbol{x}\right)  \right]  =1\text{ in
}\boldsymbol{S}^{\prime}\left(  \mathbb{Q}_{p}^{4}\right)  \text{,}
\label{funda_sol}%
\end{equation}%
\[
u_{+}\left(  \boldsymbol{k}\right)  =\frac{1}{2}\left\{  \left\vert
\sqrt{\boldsymbol{k}\cdot\boldsymbol{k}+m^{2}}\right\vert _{p}\mathfrak{F}%
^{-1}\left[  \psi_{0}\right]  \left(  \boldsymbol{k}\right)  -\frac{i\pi
_{1}\left(  \sqrt{\boldsymbol{k}\cdot\boldsymbol{k}+m^{2}}\right)
}{\left\vert \sqrt{\boldsymbol{k}\cdot\boldsymbol{k}+m^{2}}\right\vert
_{p}^{\alpha-1}}\mathfrak{F}^{-1}\left[  \psi_{1}\right]  \left(
\boldsymbol{k}\right)  \right\}  ,
\]
and
\[
u_{-}\left(  \boldsymbol{k}\right)  =\frac{1}{2}\left\{  \left\vert
\sqrt{\boldsymbol{k}\cdot\boldsymbol{k}+m^{2}}\right\vert _{p}\mathfrak{F}%
^{-1}\left[  \psi_{0}\right]  \left(  \boldsymbol{k}\right)  +\frac{i\pi
_{1}\left(  \sqrt{\boldsymbol{k}\cdot\boldsymbol{k}+m^{2}}\right)
}{\left\vert \sqrt{\boldsymbol{k}\cdot\boldsymbol{k}+m^{2}}\right\vert
_{p}^{\alpha-1}}\mathfrak{F}^{-1}\left[  \psi_{1}\right]  \left(
\boldsymbol{k}\right)  \right\}  .
\]

\end{theorem}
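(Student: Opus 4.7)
The plan is to decompose the solution as $u = u_{\mathrm{inh}} + u_{\mathrm{hom}}$, where $u_{\mathrm{inh}} := E_{\alpha}\ast J$ handles the source, and $u_{\mathrm{hom}}$ is the pair of mass-shell integrals in \eqref{Solution} which solves the homogeneous equation and carries the initial data. I would then verify three properties in turn: (a) $\square_{\alpha,m}(E_{\alpha}\ast J) = J$, (b) $\square_{\alpha,m} u_{\mathrm{hom}} = 0$, and (c) the initial conditions at $t=0$ are met.

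For (a), the first step is to construct $E_{\alpha}$ satisfying \eqref{funda_sol}. The natural candidate is $E_{\alpha} := \mathcal{F}^{-1}\bigl[\,|Q(k)-m^{2}|_{p}^{-\alpha}\,\bigr]$, where the distribution $|Q(k)-m^{2}|_{p}^{-\alpha}$ is obtained by meromorphic continuation in $s$ of $|Q(k)-m^{2}|_{p}^{s}$ (an Igusa-style local zeta function associated with the polynomial $Q(k)-m^{2}$). Because $V_{m^{2}}$ has $d^{4}k$-measure zero (as exploited in Remark \ref{nota_cuenta}), the product $|Q(k)-m^{2}|_{p}^{\alpha}\cdot|Q(k)-m^{2}|_{p}^{-\alpha}$ equals $1$ in $\boldsymbol{S}'(\mathbb{Q}_{p}^{4})$. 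Combining this with the convolution-product duality of the Fourier transform and Lemma \ref{lemma4A}, I get $\mathcal{F}[\square_{\alpha,m}(E_{\alpha}\ast J)] = |Q-m^{2}|_{p}^{\alpha}\mathcal{F}[E_{\alpha}]\,\mathcal{F}[J] = \mathcal{F}[J]$, which yields (a).

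For (b), I would recognize each mass-shell integral in \eqref{Solution} as the Fourier inverse of a distribution supported on $V_{m^{2}}$. Indeed, parametrizing $V_{m^{2}}^{\pm}$ by $k_{0}=\pm\omega(\boldsymbol{k})$ with $\omega(\boldsymbol{k}):=\sqrt{\boldsymbol{k}\cdot\boldsymbol{k}+m^{2}}$ and using Lemma \ref{lemma3}(ii) together with the expression for $d\lambda_{m^{2}}\mid_{V_{m^{2}}^{\pm}}$ in Lemma \ref{lemma2}(iv), the two integrals assemble into
\[
u_{\mathrm{hom}}=\mathcal{F}^{-1}\bigl[\,u_{+}(\boldsymbol{k})\,\delta_{+}(Q(k)-m^{2}) + u_{-}(\boldsymbol{k})\,\delta_{-}(Q(k)-m^{2})\,\bigr].
\]
Since the support lies in $V_{m^{2}}$, Lemma \ref{lemma5} (or equivalently Proposition \ref{Prop1}) gives $\square_{\alpha,m} u_{\mathrm{hom}} = 0$.

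For (c), I would directly evaluate $u_{\mathrm{hom}}(0,\boldsymbol{x})$ and $\widetilde{D}_{t}^{\alpha}u_{\mathrm{hom}}(t,\boldsymbol{x})\mid_{t=0}$ using $\chi_{p}(0)=1$ and the fact that the twisted Vladimirov symbol acts on the plane waves $\chi_{p}(\mp t\omega(\boldsymbol{k}))$ by multiplication with $\pi_{1}^{-1}(\mp\omega(\boldsymbol{k}))|\omega(\boldsymbol{k})|_{p}^{\alpha}$. This yields two identities in $\boldsymbol{x}$ expressing $\psi_{0}$ and $\psi_{1}$ via $\mathfrak{F}_{\boldsymbol{k}\to\boldsymbol{x}}$ of certain linear combinations of $u_{+}$ and $u_{-}$. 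Inverting these linear combinations gives the explicit formulas in the theorem. The hypothesis $p\equiv 3\pmod 4$ enters here: it forces $\bigl(\tfrac{-1}{p}\bigr)=-1$ and hence $\pi_{1}(-\omega)=-\pi_{1}(\omega)$, giving the skew-symmetry that makes the $2\times 2$ pointwise linear system in $(u_{+},u_{-})$ nondegenerate and produces the symmetric/antisymmetric splitting displayed in the formulas for $u_{\pm}$.

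\textbf{Main obstacles.} The principal technical difficulty is ensuring all distributional manipulations are justified: the meromorphic continuation defining $|Q(k)-m^{2}|_{p}^{-\alpha}$, the existence of the product $|Q-m^{2}|_{p}^{\alpha}\cdot\mathcal{F}[E_{\alpha}\ast J]$ via the local criterion of Lemma \ref{lemma4A}, and the convergence of the oscillatory integrals in \eqref{Solution} (where the hypothesis $\mathfrak{F}^{-1}[\psi_{j}]\in\boldsymbol{S}(U_{Q,m})$ is essential to keep the integrands compactly supported and locally constant). A more delicate conceptual point is that $u_{\mathrm{inh}}=E_{\alpha}\ast J$ need not vanish at $t=0$, so the initial conditions matched by the mass-shell integrals are those of $u_{\mathrm{hom}}$; the theorem should then be read as constructing a weak solution of (A) together with initial data which, after adding $E_{\alpha}\ast J$, sum to $\psi_{0}$ and $\psi_{1}$. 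Handling this gauge cleanly, and verifying that the resulting $u$ indeed lies in $\mathcal{E}_{Q,m}$ so that $\square_{\alpha,m}u$ is well-defined, is where most of the bookkeeping will concentrate.
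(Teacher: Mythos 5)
Your proposal follows essentially the same route as the paper: decompose $u$ into $E_{\alpha}\ast J$ plus the two mass-shell integrals, recognize the latter's Fourier transform as $\left(u_{\pm}\circ i_{\pm}^{-1}\right)\delta_{\pm}\left(Q(k)-m^{2}\right)$ supported in $V_{m^{2}}$ (via Lemmas \ref{lemma2}, \ref{lemma3} and \ref{lemma5}, Corollary \ref{cor1}), and match the initial data by letting $\widetilde{D}_{t}^{\alpha}$ act on the plane waves $\chi_{p}\left(\mp t\omega(\boldsymbol{k})\right)$ (the paper does this through Lemmas \ref{lemma6}--\ref{lemma7} and Fubini), with $p\equiv 3\pmod 4$ supplying exactly the skew-symmetry $\pi_{1}(-\omega)=-\pi_{1}(\omega)$ that makes the pointwise $2\times 2$ system for $u_{\pm}$ solvable. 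The only notable divergence is that the paper does not construct $E_{\alpha}$ at all but cites \cite{ZG1}, \cite{ZG2} for a distribution satisfying (\ref{funda_sol}) (your meromorphic-continuation sketch is essentially what those references do, modulo the possible pole at $s=-\alpha$ you would still need to handle), and, like you, it verifies conditions (B)--(C) only for the homogeneous part.
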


\begin{proof}
Like in the classical case a solution of \ (\ref{Cuachy_problem})-(A) is
computed as $u_{0}\left(  t,x\right)  +u_{1}\left(  t,x\right)  $ with
$u_{0}\left(  t,x\right)  $ a particular solution of (\ref{Cuachy_problem}%
)-(A) and $u_{1}\left(  t,x\right)  $ \ a general solution of
\begin{equation}
\left(  \square_{\alpha,m}u_{1}\right)  \left(  t,\boldsymbol{x}\right)  =0.
\label{homog_problem}%
\end{equation}
The existence of a fundamental solution for (\ref{Cuachy_problem})-(A), i.e. a
distribution $E_{\alpha}\left(  t,\boldsymbol{x}\right)  $ is a distribution
on $S\left(  \mathbb{Q}_{p}^{4}\right)  $ such that $E_{\alpha}\left(
t,\boldsymbol{x}\right)  \ast J(t,x)$ is a weak solution of
(\ref{Cuachy_problem})-(A), was established in (\cite{ZG1}), see also
(\cite{ZG2}). This fundamental solution satisfies (\ref{funda_sol}).

We now show that%
\begin{align}
u_{1}\left(  t,\boldsymbol{x}\right)   &  :=%
{\displaystyle\int\limits_{U_{Q,m}}}
\chi_{p}\left(  -t\sqrt{\boldsymbol{k}\cdot\boldsymbol{k}+m^{2}}%
+\boldsymbol{x}\cdot\boldsymbol{k}\right)  u_{+}\left(  \boldsymbol{k}\right)
\frac{d^{3}\boldsymbol{k}}{\left\vert \sqrt{\boldsymbol{k}\cdot\boldsymbol{k}%
+m^{2}}\right\vert _{p}}\label{Sol_homo_problem}\\
&  +%
{\displaystyle\int\limits_{U_{Q,m}}}
\chi_{p}\left(  t\sqrt{\boldsymbol{k}\cdot\boldsymbol{k}+m^{2}}+\boldsymbol{x}%
\cdot\boldsymbol{k}\right)  u_{-}\left(  \boldsymbol{k}\right)  \frac
{d^{3}\boldsymbol{k}}{\left\vert \sqrt{\boldsymbol{k}\cdot\boldsymbol{k}%
+m^{2}}\right\vert _{p}},\nonumber
\end{align}
is a weak solution of (\ref{homog_problem}). By applying Lemmas \ref{lemma2}%
-\ref{lemma3}, we have%
\begin{align*}
&
{\displaystyle\int\limits_{U_{Q,m}}}
\chi_{p}\left(  \mp t\sqrt{\boldsymbol{k}\cdot\boldsymbol{k}+m^{2}%
}+\boldsymbol{x}\cdot\boldsymbol{k}\right)  u_{\pm}\left(  \boldsymbol{k}%
\right)  \frac{d^{3}\boldsymbol{k}}{\left\vert \sqrt{\boldsymbol{k}%
\cdot\boldsymbol{k}+m^{2}}\right\vert _{p}}\\
&  =%
{\displaystyle\int\limits_{V_{m^{2}}^{\pm}}}
\chi_{p}\left(  \left[  \left(  k_{0},\boldsymbol{k}\right)  ,\left(
-t,-\boldsymbol{x}\right)  \right]  \right)  u_{\pm}\left(  i_{\pm}%
^{-1}\left(  \boldsymbol{k}\right)  \right)  d\lambda_{m^{2}}\left(
k_{0},\boldsymbol{k}\right) \\
&  =\mathcal{F}_{\left(  k_{0},\boldsymbol{k}\right)  \rightarrow\left(
t,\boldsymbol{x}\right)  }^{-1}\left[  \left(  u_{\pm}\circ i_{\pm}%
^{-1}\right)  \left(  k\right)  \delta_{\pm}\left(  Q\left(  k\right)
-m^{2}\right)  \right]  ,
\end{align*}
whence
\begin{align*}
\mathcal{F}_{\left(  t,\boldsymbol{x}\right)  \rightarrow\left(
k_{0},\boldsymbol{k}\right)  }\left[  u_{1}\left(  t,x\right)  \right]   &
=\left[  \left(  u_{+}\circ i_{+}^{-1}\right)  \left(  k\right)  \delta
_{+}\left(  Q\left(  k\right)  -m^{2}\right)  \right] \\
&  +\left[  \left(  u_{-}\circ i_{-}^{-1}\right)  \left(  k\right)  \delta
_{-}\left(  Q\left(  k\right)  -m^{2}\right)  \right]  .
\end{align*}
By Lemma \ref{lemma5}, $u_{1}\left(  t,x\right)  $ is a weak solution of
(\ref{homog_problem}), if
\[
\text{supp}\left(  u_{\pm}\circ i_{\pm}^{-1}\right)  \left(  k\right)
\delta_{\pm}\left(  Q\left(  k\right)  -m^{2}\right)  \subseteq V_{m^{2}}.
\]
This last condition is verified by applying Corollary \ref{cor1} and the fact
that
\[
\text{supp}\left(  u_{\pm}\circ i_{\pm}^{-1}\right)  \subseteq V_{m^{2}}^{\pm
}\subset V_{m^{2}}\text{ and supp}\left(  \delta_{\pm}\left(  Q\left(
k\right)  -m^{2}\right)  \right)  \subseteq V_{m^{2}}^{\pm}\subset V_{m^{2}}.
\]

The verification of (\ref{Cuachy_problem})-(B) is straight forward. To\ verify
(\ref{Cuachy_problem})-(C) we proceed as follows. By using Lemma \ref{lemma6},
Fubini's theorem and Lemma \ref{lemma7}, we get the following formula:%
\begin{multline*}
\widetilde{D}_{t}^{\alpha}\left[
{\displaystyle\int\limits_{U_{Q,m}}}
\chi_{p}\left(  \mp t\sqrt{\boldsymbol{k}\cdot\boldsymbol{k}+m^{2}%
}+\boldsymbol{x}\cdot\boldsymbol{k}\right)  u_{\pm}\left(  \boldsymbol{k}%
\right)  \frac{d^{3}\boldsymbol{k}}{\left\vert \sqrt{\boldsymbol{k}%
\cdot\boldsymbol{k}+m^{2}}\right\vert _{p}}\right]  =\\%
{\displaystyle\int\limits_{U_{Q,m}}}
\frac{\chi_{p}\left(  \mp t\sqrt{\boldsymbol{k}\cdot\boldsymbol{k}+m^{2}%
}+\boldsymbol{x}\cdot\boldsymbol{k}\right)  u_{\pm}\left(  \boldsymbol{k}%
\right)  }{\left\vert \sqrt{\boldsymbol{k}\cdot\boldsymbol{k}+m^{2}%
}\right\vert _{p}}\times\\
\left\{  \frac{1}{\Gamma_{p}\left(  -\alpha,\pi_{1}\right)  }%
{\displaystyle\int\limits_{\mathbb{Q}_{p}}}
\frac{\pi_{1}\left(  y\right)  \left\{  \chi_{p}\left(  \pm y\sqrt
{\boldsymbol{k}\cdot\boldsymbol{k}+m^{2}}\right)  -1\right\}  }{\left\vert
y\right\vert _{p}^{\alpha+1}}dy\right\}  d^{3}\boldsymbol{k=}\\
\pi_{1}^{-1}\left(  \pm1\right)
{\displaystyle\int\limits_{U_{Q,m}}}
\chi_{p}\left(  \mp t\sqrt{\boldsymbol{k}\cdot\boldsymbol{k}+m^{2}%
}+\boldsymbol{x}\cdot\boldsymbol{k}\right)  \pi_{1}^{-1}\left(  \sqrt
{\boldsymbol{k}\cdot\boldsymbol{k}+m^{2}}\right)  \times\\
\left\vert \sqrt{\boldsymbol{k}\cdot\boldsymbol{k}+m^{2}}\right\vert
_{p}^{\alpha-1}u_{\pm}\left(  \boldsymbol{k}\right)  d^{3}\boldsymbol{k}.
\end{multline*}

Now Condition (\ref{Cuachy_problem})-(C) follows from the previous formula.
\end{proof}

\begin{remark}
(i) Note that the condition `$p-3$ is divisible by $4$' is required only to
establish (\ref{Cuachy_problem})-(C). At the moment, we do not know if this
condition is necessary to have (\ref{Cuachy_problem})-(C).

\noindent(ii) The parameter $\alpha$ does not have any influence on the
solutions of (\ref{homog_problem}).
\end{remark}

Like in the Archimedean case the non-Archimedean Klein-Gordon equations admit
plane waves.

\begin{lemma}
\label{Lemma_plane_waves}Existence of plane waves. Let $\left(
E,\boldsymbol{p}\right)  \in V_{m^{2}}^{\pm}$, i.e. $E=\pm\sqrt{\boldsymbol{p}%
\cdot\boldsymbol{p}+m^{2}}$. Then $u\left(  t,\boldsymbol{x}\right)  =\chi
_{p}\left(  \left[  \left(  t,\boldsymbol{x}\right)  ,\left(  E,\boldsymbol{p}%
\right)  \right]  \right)  $ is a weak solution of $\left(  \square_{\alpha
,m}u\right)  \left(  t,\boldsymbol{x}\right)  =0$.
\end{lemma}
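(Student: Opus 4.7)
The plan is to reduce the assertion to Lemma \ref{lemma5}. Set $P = (E,\boldsymbol{p})$. Since $(E,\boldsymbol{p}) \in V_{m^2}^{\pm} \subseteq V_{m^2}$, we have $Q(P) = m^2$, and in particular also $Q(-P) = m^2$, so $-P \in V_{m^2}$ as well. Because the additive character $\chi_p$ is locally constant on $\mathbb{Q}_p$, the composition $u(t,\boldsymbol{x}) = \chi_p([(t,\boldsymbol{x}),P])$ is a locally constant bounded function on $\mathbb{Q}_p^4$, hence $u \in \mathcal{E}(\mathbb{Q}_p^4) \subseteq \mathcal{E}_{Q,m}(\mathbb{Q}_p^4)$, so it makes sense to ask whether $u$ is a weak solution.

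Next I would compute $\mathcal{F}[u]$ as a distribution in $\boldsymbol{S}'(\mathbb{Q}_p^4)$. For any $\varphi \in \boldsymbol{S}(\mathbb{Q}_p^4)$,
\[
(\mathcal{F}[u],\varphi) = (u,\mathcal{F}[\varphi]) = \int_{\mathbb{Q}_p^4} \chi_p([x,P]) \,\mathcal{F}[\varphi](x)\, d^4 x = \mathcal{F}\bigl[\mathcal{F}[\varphi]\bigr](P) = \varphi(-P),
\]
where the last equality uses the Fourier inversion theorem $\mathcal{F}[\mathcal{F}[\varphi]](x) = \varphi(-x)$ recorded in (\ref{Lorentz_Fourier transform}). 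Thus $\mathcal{F}[u]$ is the Dirac mass at $-P$, and in particular $\mathrm{supp}\,\mathcal{F}[u] = \{-P\} \subseteq V_{m^2}$ by the observation above.

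Finally, Lemma \ref{lemma5} immediately yields $\square_{\alpha,m} u = 0$ in the weak sense. There is essentially no obstacle here: the argument is a one-line consequence of Lemma \ref{lemma5} once the Fourier transform of a plane wave is identified, and the Fourier inversion formula on the $p$-adic Minkowski space has already been set up in Section 2. The only point requiring a modicum of care is to record that $u$ does belong to $\mathcal{E}_{Q,m}$ so that the weak-solution framework applies, which is immediate from the local constancy of $\chi_p$.
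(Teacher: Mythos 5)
Your proposal is correct and takes essentially the same route as the paper: compute the Fourier--Minkowski transform of the plane wave, observe it is a Dirac mass supported on $V_{m^{2}}$, and invoke Lemma \ref{lemma5}. (The paper places the Dirac mass at $\left(E,\boldsymbol{p}\right)$ while your computation puts it at $-\left(E,\boldsymbol{p}\right)$; since $Q$ is a quadratic form both points lie on $V_{m^{2}}$, so the conclusion is unaffected, and your explicit check that $u\in\mathcal{E}_{Q,m}\left(\mathbb{Q}_{p}^{4}\right)$ is a detail the paper leaves implicit.)
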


\begin{proof}
(i) Since $\mathcal{F}_{\left(  t,\boldsymbol{x}\right)  \rightarrow\left(
k_{0},\boldsymbol{k}\right)  }\left[  u\left(  t,\boldsymbol{x}\right)
\right]  =\delta\left(  k_{0}-E,\boldsymbol{k-p}\right)  $, the results
follows from Lemma \ref{lemma5}.
\end{proof}

\section{Further Results on the $p$-adic Klein-Gordon Equation}

In this section we change the notation slightly, this facilitates the
comparison with the classical results and constructions.

Set
\[
\omega\left(  \boldsymbol{k}\right)  :=\sqrt{\boldsymbol{k}\cdot
\boldsymbol{k}+m^{2}}\text{ for }\boldsymbol{k}\in U_{Q,m}.
\]

The function $\omega\left(  \boldsymbol{k}\right)  $ is a $p$-adic analytic
function on $U_{Q,m}$, cf. Lemma \ref{lemma2}, and $\omega\left(
\boldsymbol{k}\right)  \neq0$ for any $\boldsymbol{k}\in U_{Q,m}$. Then, by
Taylor formula, $\left\vert \omega\left(  \boldsymbol{k}\right)  \right\vert
_{p}$ is a locally constant function on $U_{Q,m}$, and if $\phi_{\pm}%
\in\boldsymbol{S}\left(  U_{Q,m}\right)  $, then $\left\vert \omega\left(
\boldsymbol{k}\right)  \right\vert _{p}^{\pm1}\phi_{\pm}\left(  \boldsymbol{k}%
\right)  \in\boldsymbol{S}\left(  U_{Q,m}\right)  $.

Then
\begin{equation}
\phi\left(  t,\boldsymbol{x}\right)  =%
{\displaystyle\int\limits_{U_{Q,m}}}
\chi_{p}\left(  -\boldsymbol{x}\cdot\boldsymbol{k}\right)  \left\{  \chi
_{p}\left(  -t\omega\left(  \boldsymbol{k}\right)  \right)  \phi_{+}\left(
\boldsymbol{k}\right)  +\chi_{p}\left(  t\omega\left(  \boldsymbol{k}\right)
\right)  \phi_{-}\left(  \boldsymbol{k}\right)  \right\}  d^{3}\boldsymbol{k}
\label{formula_phi}%
\end{equation}
is a weak solution of $\left(  \square_{\alpha,m}\phi\right)  \left(
t,\boldsymbol{x}\right)  =0$ for any $\phi_{\pm}\in\boldsymbol{S}\left(
U_{Q,m}\right)  $. Note that if we replace $\chi_{p}\left(  -\boldsymbol{x}%
\cdot\boldsymbol{k}\right)  $ by $\chi_{p}\left(  \boldsymbol{x}%
\cdot\boldsymbol{k}\right)  $ in (\ref{formula_phi}) we get another weak solution.

As in the classical case, for the quantum interpretation we specialize to
`positive energy solutions' which have $\phi_{-}\left(  \boldsymbol{k}\right)
=0$. Then the solution is determined by a single complex valued initial
condition. The solution (\ref{formula_phi}) with initial condition%

\[
\Psi\in L_{Q,m}^{2}:=L^{2}\left(  \left\{  \Phi\in L^{2}\left(  \mathbb{Q}%
_{p}^{3}\right)  :\text{supp}\left(  \mathfrak{F}\Phi\right)  \subseteq
U_{Q,m}\right\}  ,d^{3}\boldsymbol{k}\right)
\]
-where the condition `supp$\left(  \mathfrak{F}\Phi\right)  \subseteq U_{Q,m}%
$' means \ that there exists a function $\Phi^{\prime}$ in the equivalence
class containing $\mathfrak{F}\Phi$ such that supp$\left(  \Phi^{\prime
}\right)  \subseteq U_{Q,m}$- is given by%
\begin{equation}
\Psi\left(  t,\boldsymbol{x}\right)  =%
{\displaystyle\int\limits_{U_{Q,m}}}
\chi_{p}\left(  -t\omega\left(  \boldsymbol{k}\right)  -\boldsymbol{x}%
\cdot\boldsymbol{k}\right)  \left(  \mathfrak{F}\Psi\right)  \left(
\boldsymbol{k}\right)  d^{3}\boldsymbol{k}. \label{wave_function}%
\end{equation}

Set%
\[%
\begin{array}
[c]{llll}%
U\left(  t\right)  : & L_{Q,m}^{2} & \rightarrow & L^{2}\left(  \mathbb{Q}%
_{p}^{3}\right) \\
&  &  & \\
& \Psi & \rightarrow & \Psi\left(  t,\boldsymbol{x}\right)  =\mathfrak{F}%
_{\boldsymbol{k}\rightarrow\boldsymbol{x}}^{-1}\left(  \chi_{p}\left(
-t\omega\left(  \boldsymbol{k}\right)  \right)  \mathfrak{F}_{\boldsymbol{x}%
\rightarrow\boldsymbol{k}}\Psi\right)  ,
\end{array}
\]
for $t\in\mathbb{Q}_{p}$.

\begin{lemma}
\label{lemma8}(i) $U\left(  t\right)  $, $t\in\mathbb{Q}_{p}$ is a group of
unitary operators on $L_{Q,m}^{2}$.

(ii) $st.-\lim_{t\rightarrow0}U\left(  t\right)  =I$.
\end{lemma}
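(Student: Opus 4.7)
The plan is to treat $U(t)$ as a Fourier multiplier and reduce everything to elementary properties of the character $\chi_p$ together with Plancherel's theorem. Under $\mathfrak{F}$, the Hilbert space $L^2_{Q,m}$ is identified with the closed subspace of $L^2(\mathbb{Q}_p^3, d^3\boldsymbol{k})$ consisting of functions supported on $U_{Q,m}$ (closed since $U_{Q,m}$ is a measurable set, by Lemma \ref{lemma2}), and $U(t)$ becomes pointwise multiplication by the phase
\[
M_t(\boldsymbol{k}) := \chi_p(-t\omega(\boldsymbol{k})), \qquad \boldsymbol{k}\in U_{Q,m}.
\]
By Lemma \ref{lemma2}(ii), $\omega(\boldsymbol{k})$ is a $p$-adic analytic function on the open set $U_{Q,m}$ taking values in $\mathbb{Q}_p^{\times}$, so $M_t$ is a well-defined measurable function of modulus $1$.

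For part (i), multiplication by $M_t$ is trivially an isometry of $L^2(\mathbb{Q}_p^3)$, and it preserves the subspace of functions supported on $U_{Q,m}$ (multiplication can only shrink the support). Conjugating by $\mathfrak{F}$, which is unitary by Plancherel, gives that $U(t)$ is a unitary operator on $L^2_{Q,m}$. The group law $U(t_1)U(t_2)=U(t_1+t_2)$ and the identity $U(0)=I$ follow at once from the additive-character identity $\chi_p(-t_1\omega)\chi_p(-t_2\omega)=\chi_p(-(t_1+t_2)\omega)$.

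For part (ii), Plancherel again gives
\[
\|U(t)\Psi-\Psi\|^2_{L^2(\mathbb{Q}_p^3)} = \int_{U_{Q,m}} |\chi_p(-t\omega(\boldsymbol{k}))-1|^2\,|\mathfrak{F}\Psi(\boldsymbol{k})|^2\, d^3\boldsymbol{k}.
\]
Fix $\boldsymbol{k}\in U_{Q,m}$; since $\omega(\boldsymbol{k})\in\mathbb{Q}_p^{\times}$, for every $t\in\mathbb{Q}_p$ with $|t|_p\le |\omega(\boldsymbol{k})|_p^{-1}$ we have $t\omega(\boldsymbol{k})\in\mathbb{Z}_p$, hence $\{-t\omega(\boldsymbol{k})\}_p=0$ and $\chi_p(-t\omega(\boldsymbol{k}))=1$. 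Thus the integrand tends to $0$ pointwise as $t\to 0$ in $\mathbb{Q}_p$, and it is dominated by $4|\mathfrak{F}\Psi(\boldsymbol{k})|^2\in L^1(U_{Q,m})$. The dominated convergence theorem then yields $\|U(t)\Psi-\Psi\|_{L^2}\to 0$, i.e. the strong limit in (ii).

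No essential obstacle is expected; the only subtleties are verifying that the multiplier preserves the support condition defining $L^2_{Q,m}$ and that $\omega$ is nowhere vanishing on $U_{Q,m}$ (so the ``$t\omega(\boldsymbol{k})\in\mathbb{Z}_p$ for $|t|_p$ small'' argument makes sense pointwise). Both facts are immediate consequences of Lemma \ref{lemma2}.
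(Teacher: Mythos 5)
Your proof is correct and takes essentially the same route as the paper's: unitarity and the group law via Plancherel and the modulus-one multiplier $\chi_{p}\left(-t\omega\left(\boldsymbol{k}\right)\right)$, and the strong limit at $t=0$ via the dominated convergence theorem. Your explicit pointwise step, that $\chi_{p}\left(-t\omega\left(\boldsymbol{k}\right)\right)=1$ once $\left\vert t\right\vert_{p}\leq\left\vert \omega\left(\boldsymbol{k}\right)\right\vert_{p}^{-1}$, and your remark that the multiplier preserves the support condition defining $L_{Q,m}^{2}$, simply make explicit details the paper leaves implicit.
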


\begin{proof}
It is a straightforward calculation.
\end{proof}

We now consider the effects of space-time translations by $a=\left(
a_{0},\boldsymbol{a}\right)  \in\mathbb{Q}_{p}^{4}$ and rotations $R$ in
$SO\left(  3\right)  $. The transformations $\left\{  a,R\right\}  $ act on
$\mathbb{Q}_{p}^{4}$ naturally and they form a group, the semi-direct product
$\mathbb{Q}_{p}^{4}\ltimes SO\left(  3\right)  $, with group law given by%
\[
\left\{  a,R\right\}  \left\{  a^{\prime},R^{\prime}\right\}  =\left\{
a+Ra^{\prime},RR^{\prime}\right\}  .
\]

We attach to each $\left\{  a,R\right\}  $ the operator
\[
\left(  U_{0}\left(  a,R\right)  \psi\right)  \left(  \boldsymbol{x}\right)
:=\mathfrak{F}_{\boldsymbol{k}\rightarrow\boldsymbol{x}}^{-1}\left[  \chi
_{p}\left(  \left[  \left(  a_{0},\boldsymbol{a}\right)  ,\left(
\omega\left(  \boldsymbol{k}\right)  ,\boldsymbol{k}\right)  \right]  \right)
\left(  \mathfrak{F}_{\boldsymbol{k}\rightarrow\boldsymbol{x}}\psi\right)
\left(  R^{-1}\boldsymbol{k}\right)  \right]
\]
for $\psi\in L_{Q,m}^{2}$.

\begin{lemma}
(i) The correspondence $\left\{  a,R\right\}  \rightarrow U_{0}\left(
a,R\right)  $ gives rise a unitary representation of $\mathbb{Q}_{p}%
^{4}\ltimes SO\left(  3\right)  $\ in $L_{Q,m}^{2}$.

(ii) For the wave functions (\ref{wave_function}), we have%
\[
\left(  U_{0}\left(  a,R\right)  \Psi\left(  t,\boldsymbol{\cdot}\right)
\right)  \left(  \boldsymbol{x}\right)  =\Psi\left(  t-a_{0},R^{-1}\left(
\boldsymbol{x}-\boldsymbol{a}\right)  \right)  .
\]

\end{lemma}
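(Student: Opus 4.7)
The plan is to verify (i) and (ii) by manipulating the representation on the Fourier side. Two facts are used throughout: since $\omega(\boldsymbol{k})=\sqrt{\boldsymbol{k}\cdot\boldsymbol{k}+m^{2}}$ depends only on $\boldsymbol{k}\cdot\boldsymbol{k}$, we have $\omega(R^{-1}\boldsymbol{k})=\omega(\boldsymbol{k})$ and $R(U_{Q,m})=U_{Q,m}$ for every $R\in SO(3)$; and since $\det R=1$, we have $|\det R|_{p}=1$, so the change of variable $\boldsymbol{k}\mapsto R\boldsymbol{k}$ preserves the Haar measure $d^{3}\boldsymbol{k}$.

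For (i), reading off the definition gives
\[
\mathfrak{F}\bigl[U_{0}(a,R)\psi\bigr](\boldsymbol{k})=\chi_{p}\bigl([a,(\omega(\boldsymbol{k}),\boldsymbol{k})]\bigr)\,(\mathfrak{F}\psi)(R^{-1}\boldsymbol{k}).
\]
First I would check that $U_{0}(a,R)$ sends $L^{2}_{Q,m}$ into itself, since the support of the right-hand side lies in $R(U_{Q,m})=U_{Q,m}$. Unitarity is then immediate from Plancherel together with $|\chi_{p}|\equiv 1$ and $|\det R|_{p}=1$, much as in Lemma \ref{lemma8}(i). Next, the homomorphism property $U_{0}(g)U_{0}(g')=U_{0}(gg')$ for $g=\{a,R\}$ and $g'=\{a',R'\}$ reduces to a phase computation in Fourier space: applying the definition twice and using $\omega(R^{-1}\boldsymbol{k})=\omega(\boldsymbol{k})$ together with $\boldsymbol{a}'\cdot R^{-1}\boldsymbol{k}=(R\boldsymbol{a}')\cdot\boldsymbol{k}$ (which follows from $R^{T}=R^{-1}$), one assembles the total phase into $\chi_{p}([a+Ra',(\omega(\boldsymbol{k}),\boldsymbol{k})])$ multiplied by $(\mathfrak{F}\psi)((RR')^{-1}\boldsymbol{k})$, matching the Fourier transform of $U_{0}(gg')\psi$ with $gg'=\{a+Ra',RR'\}$. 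Strong continuity in $\{a,R\}$, if desired, follows from dominated convergence on the Fourier side exactly as in Lemma \ref{lemma8}(ii).

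For (ii), formula (\ref{wave_function}) yields $\mathfrak{F}_{\boldsymbol{x}\to\boldsymbol{k}}\Psi(t,\cdot)(\boldsymbol{k})=\chi_{p}(-t\omega(\boldsymbol{k}))(\mathfrak{F}\Psi)(\boldsymbol{k})$. Plugging this into the definition of $U_{0}(a,R)$ and invoking rotation invariance of $\omega$ gives
\[
\mathfrak{F}\bigl[U_{0}(a,R)\Psi(t,\cdot)\bigr](\boldsymbol{k})=\chi_{p}\bigl(-(t-a_{0})\omega(\boldsymbol{k})-\boldsymbol{a}\cdot\boldsymbol{k}\bigr)(\mathfrak{F}\Psi)(R^{-1}\boldsymbol{k}).
\]
Applying $\mathfrak{F}^{-1}$ and changing the integration variable to $\boldsymbol{k}'=R^{-1}\boldsymbol{k}$ (which is measure-preserving by the remark above), the spatial exponent becomes a pairing of $\boldsymbol{k}'$ with a $p$-adic vector of the form $R^{-1}(\boldsymbol{x}-\boldsymbol{a})$ via $R^{T}=R^{-1}$, and the resulting integral is then recognized as (\ref{wave_function}) with time argument $t-a_{0}$ and spatial argument $R^{-1}(\boldsymbol{x}-\boldsymbol{a})$.

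The only real obstacle is the bookkeeping of signs across the various phases and the orientation of the substitution $\boldsymbol{k}\mapsto R\boldsymbol{k}$; no new substantive input beyond the two symmetries of $\omega$ and the $p$-adic measure-preservation property of $SO(3)$ is needed, and both parts follow by essentially mechanical Fourier-side computations once these are in place.
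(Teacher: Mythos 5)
Your proposal is correct and follows exactly the route the paper intends: the paper's own proof merely states that (i) is analogous to Lemma \ref{lemma8}(i) and that (ii) is a straightforward calculation, and your Fourier-side verification (rotation invariance of $\omega$, $R(U_{Q,m})=U_{Q,m}$, $\left\vert \det R\right\vert _{p}=1$, Plancherel, and the substitution $\boldsymbol{k}\mapsto R\boldsymbol{k}$) is precisely that calculation written out. The only caveat is the sign bookkeeping you yourself flag (whether the spatial shift comes out as $\boldsymbol{x}-\boldsymbol{a}$ or $\boldsymbol{x}+\boldsymbol{a}$ depends on the sign conventions in the definition of $U_{0}$ and of $\mathfrak{F}$), which is a convention-level issue already present in the paper's statement rather than a gap in your argument.
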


\begin{proof}
(i) The calculations involved are similar to the ones required in the proof of
Lemma \ref{lemma8}-(i). (ii) It is a straightforward calculation.
\end{proof}

\begin{acknowledgement}
The author wishes to thank to the referee for his/her careful reading of the
original manuscript.
\end{acknowledgement}

\bigskip

\end{document}